\newtheorem{claim}{Claim}
\newcommand{\Sec}[1]{\hyperref[sec:#1]{\S\ref*{sec:#1}}} %
\newcommand{\Eqn}[1]{\hyperref[eq:#1]{(\ref*{eq:#1})}} %
\newcommand{\Fig}[1]{\hyperref[fig:#1]{Figure\,\ref*{fig:#1}}} %
\newcommand{\Tab}[1]{\hyperref[tab:#1]{Table\,\ref*{tab:#1}}} %
\newcommand{\Thm}[1]{\hyperref[thm:#1]{Thm.\,\ref*{thm:#1}}} %
\newcommand{\Lem}[1]{\hyperref[lem:#1]{Lem.\,\ref*{lem:#1}}} %
\newcommand{\Prop}[1]{\hyperref[prop:#1]{Prop.~\ref*{prop:#1}}} %
\newcommand{\Cor}[1]{\hyperref[cor:#1]{Cor.~\ref*{cor:#1}}} %
\newcommand{\Def}[1]{\hyperref[def:#1]{Defn.~\ref*{def:#1}}} %
\newcommand{\Alg}[1]{\hyperref[alg:#1]{Alg.~\ref*{alg:#1}}} %
\newcommand{\Ex}[1]{\hyperref[ex:#1]{Ex.~\ref*{ex:#1}}} %
\newcommand{\Clm}[1]{\hyperref[clm:#1]{Claim~\ref*{clm:#1}}} %
\newcommand{\din}[1][i]{d_{#1}^\leftarrow}
\newcommand{\dout}[1][i]{d_{#1}^\rightarrow}
\newcommand{\drec}[1][i]{d_{#1}^\leftrightarrow}
\newcommand{\dtin}[1][i]{d_{#1}^\Leftarrow}
\newcommand{\dtout}[1][i]{d_{#1}^\Rightarrow}
\newcommand{\dmax}{d_{\max}}
\newcommand{\ndin}[1][d]{n_{#1}^\leftarrow}
\newcommand{\ndout}[1][d]{n_{#1}^\rightarrow}
\newcommand{\ndrec}[1][d]{n_{#1}^\leftrightarrow}
\newcommand{\ndtin}[1][d]{n_{#1}^\Leftarrow}
\newcommand{\ndtout}[1][d]{n_{#1}^\Rightarrow}
\newcommand{\brec}{b^\leftrightarrow}
\newcommand{\bin}{b^\leftarrow}
\newcommand{\bout}{b^\rightarrow}
\newcommand{\btin}{b^\Leftarrow}
\newcommand{\btout}{b^\Rightarrow}
\newcommand{\Prob}[1]{\text{\rm Prob}\Set{ #1 }}
\newcommand{\TTF}[1]{\multicolumn{1}{|c|}{\bf #1}}
\newcommand{\TT}[1]{\multicolumn{1}{c|}{\bf #1}}
\newcommand{\TC}[1]{\multicolumn{1}{c|}{#1}}
\begin{document}

\title{A Scalable Null Model for Directed Graphs Matching All Degree Distributions: In, Out, and Reciprocal}
\author{\IEEEauthorblockN{Nurcan Durak,\    Tamara G.~Kolda,\   Ali Pinar, and     C. Seshadhri}
\IEEEauthorblockA{Sandia National Laboratories \\ Livermore, CA  USA\\
Email: nurcan.durak@gmail.com, \{tgkolda, apinar, scomand\}@sandia.gov}
}

\maketitle

\begin{abstract} 
Degree distributions are arguably the most important property of real world networks.
  The classic edge configuration model or Chung-Lu model can generate an undirected graph with any desired degree distribution.
  This serves as a good  \emph{null model} to compare algorithms or perform experimental studies.
Furthermore, there are scalable algorithms that implement
  these models and they are invaluable in the study of graphs. However, networks in the real-world
  are often directed, and have a significant proportion of \emph{reciprocal edges}.
  A stronger relation exists between two nodes when they each point to one another
  (\emph{reciprocal edge}) as compared to when only one points
  to the other (\emph{one-way edge}). Despite their importance, reciprocal
  edges have been disregarded by most directed graph
  models.  

  We propose a null model for directed graphs inspired by the Chung-Lu model that matches the in-, out-, and reciprocal-degree
  distributions of the real graphs. Our algorithm is scalable and requires $O(m)$ random numbers to generate
  a graph with $m$ edges. We perform a series of experiments on real datasets and compare
  with existing graph models.
\end{abstract}

\section{Introduction}

Ever since the seminal work of Barab\'{a}si and Albert~\cite{BaAl99}, Faloutsos et al.~\cite{FFF99}, Broder et al.~\cite{BrKu+00}, degree distributions are widely regarded
as a key feature of real-world networks. The heavy-tailed nature of these degree distributions
has been repeatedly observed in a wide variety of domains. 
One of the invaluable tools in analyzing heavy-tailed graphs is the ability to produce
a random or ``generic" graph with a desired degree distribution. 
The classic edge configuration~\cite{bc1978, b1980, MoRe98, n2003} does exactly that and 
is a common method for constructing such graphs. Chung and Lu~\cite{AiChLu00,ChLu02-2} give
more analyzable variants of this model.
MCMC methods based on random walks
are also used for this purpose~\cite{ktv,GkantsidisMMZ03}. 

These constructions are useful for testing algorithms and
comparing with existing models. It also helps in design of new algorithms. For example,
versions of the stochastic block model~\cite{BiCh09,KaNe11} used for community detection use Chung-Lu
type constructions for null models. The classic notion of modularity~\cite{GiNe02} measures deviations
from a Chung-Lu structure to measure community structure.
At a higher level, having a baseline model that accurately matches
the degree distribution informs us about other properties. Notably, work on the eigenvalue
distributions on Chung-Lu graphs~\cite{Mihail02,ChLuVu03} suggest the observations on so called ``eigenvalue power laws"
are simply a consequence of heavy tailed degree distributions.
For these reasons, we think of the edge-configuration or Chung-Lu constructions as \emph{null models}.

While all of this work has been extremely useful in advancing graph mining, it ignores the 
crucial property of \emph{direction} in networks. Most interaction, communication, web networks
are inherently directed, and the standard practice is to make these undirected.
Furthermore, directed networks exhibit \emph{reciprocity}, where some pairs of vertices
have edges in both directions connecting them. For example, in \Fig{sampleGraph},
there are two-way connections between some vertices. This indicates a much stronger connection between them.

Newman~\cite{newman02} introduces the reciprocity, $r$, which measures the density of reciprocal
edges in a network. It can be interpreted as the
probability of a random edge in a network being reciprocated. The
reciprocity is often high in social networks but is lower in information networks; see \Tab{networks}.
It was observed that high reciprocity leads to faster spread of viruses or news~\cite{newman02, GaLo04}. The importance of reciprocal edges
is underscored by a study of formation order of these edges~\cite{MiKoGuDrBh08}.
In the Flickr network (which has 68\% reciprocal edges), 83\% of
all reciprocal edges are created within 48 hours after the initial
edge creation. The Twitter network has 22.1\% of the reciprocal edges~\cite{KwLePaMo:10}.  Reciprocity also plays an important role in interactions in massive multiplayer online games~\cite{SuSi13}.
All these studies show that reciprocal edges are quite special, and provide important information about the
social processes underlying these graphs. But all graph models and constructions completely
ignore these edges.

A key concern with graph generation is simple construction and scalability, as we may want
test instances with millions (and more) edges. A key feature for  a null model is its scalability and its  ability to quickly produce 
a large graph that matches degree distributions.

\begin{figure}[h]
  \centering
    \includegraphics[scale = 0.5]{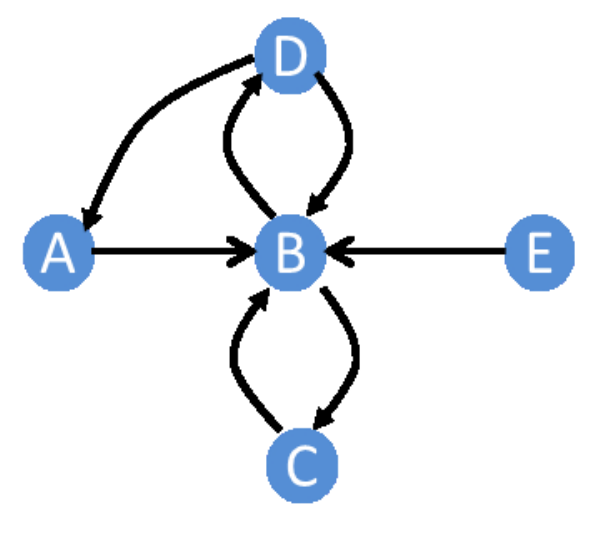}
  \caption{A directed graph with reciprocal (e.g., B-D) and one-way (e.g., D-A) edges.}
  \label{fig:sampleGraph}
\end{figure}

\subsection{Contributions}

For a directed graph, there are three distinct degree distributions associated with it:
the in-degree, the out-degree, and the reciprocal degree distribution.
The last can be thought of as the degree distribution of the undirected subgraph
obtained by only taking reciprocal edges. A good null model, along the lines
of the configuration model or Chung-Lu, must match all three of these.
We design the \emph{Fast Reciprocal Directed} (FRD) graph generator that
does exactly that.

\begin{compactitem}
\item The \emph{Fast Reciprocal Directed} (FRD) graph generator takes as input
in-,out-, and reciprocal degree distributions, and produces a random graph matching
these. It can be thought of as a generalization of the Chung-Lu model for this setting.
We provide a series of empirical results showing how it matches these degree distributions
for real datasets.
\item Our algorithm is \emph{fast and scalable}. It only requires some minimal preprocessing
and the generation of $O(m)$ random numbers. It takes less than a minute to generate a graph with multi-million nodes and
  edges, faster than any comparable models.
\item We compare FRD's degree distributions fits with existing directed graph models.
At some level, this is not a fair comparison, since we do not consider our generator to be realistic
(while competing methods attempt to match other important graph properties and mimic real world processes).
Our model is meant to be a baseline or null model that matches degree distributions. 
But our comparisons with realistic graph models are quite illuminating. Despite
the large number of reciprocal edges in real networks, none of the other models come even marginally close
to matching the reciprocal degree distribution.
\item As an aside, we explain why the number of degree-1 nodes is much lower than intended in
  Chung-Lu like models \cite{ChLu02, SeKoPi12} and propose a solution to obtain a better match
  for the degree-1 vertices. This fix is incorporated in the FRD generator.
\end{compactitem}

\section{Related Work}

As mentioned earlier, edge configuration models have a long history. Miller and Hagberg~\cite{MiHa11} discuss faster algorithms for implementing
Chung-Lu, while Seshadhri et al.~\cite{KoPiPlSe13} discuss a different parallel version.
A directed version of the edge configuration model together with mathematical analyses
of connected component structure was given in~\cite{MeNePo06}.
Our work is related to this construction.

Reciprocal edges are not taken into account by most common graph models.
The Forest Fire (FF) model~\cite{LeKlFa05} and Stochastic Kronecker Graph (SKG)
model \cite{LeFa07,LeChKlFa10} are often used to generate graphs, and do
produce directed graphs. They can match in- and out-degree distributions reasonably well,
and we use these models for comparisons.

Most common graph models (e.g., preferential attachment~\cite{BaAl99}, edge copying model\cite{KlKu99}, forest fire~\cite{LeKlFa05}) 
produce directed graphs incrementally to imitate the growth of graphs.
They produce heavy-tailed in- and out- degree distributions, but almost no reciprocal edges. Furthermore,
they are not scalable to millions of nodes and billions of edges.
The Stochastic Kronecker Graph
model~\cite{LeFa07,LeChKlFa10} is scalable, but is also unable to produce reciprocity.
In this study, we compare our results with the Forest Fire (FF) model and Stochastic Kronecker Graph (SKG) model.

A notable exception is work of Zlatic et al.~\cite{Zl09, ZlSt11} that generalizes Preferential
Attachment (PA) using reciprocal edges. Unfortunately, it is not scalable and does
not match out-degree distributions (in their experiments). Another variant of PA~\cite{BoBoChRi03}
does allow edges between existing nodes (thereby introducing some reciprocity), but the model
is not meant to really match real data.

\section{The Fast Reciprocal Directed Null Model}
\label{sec:model}

We first introduce some notation. Given a directed
graph $G$, let $n$ be the number of nodes and $m$ be the number of
directed edges. For instance, in \Fig{sampleGraph}, $n=5$ and $m=7$.
We divide the edges into three types:
\begin{compactitem}
\item $\drec$ = reciprocal degree (each reciprocal edge corresponds to a \emph{pair} of directed edges),
\item $\din$ = in-degree (excluding reciprocal edges), and
\item $\dout$ = out-degree (excluding reciprocal edges).
\end{compactitem}
We also define the \emph{total} in- and out- degrees, which include
the reciprocal edges, i.e.,
\begin{compactitem}
\item $\dtin = \din + \drec$ = total in-degree, and
\item $\dtout = \dout + \drec$ = total out-degree.
\end{compactitem}
Most directed graph models consider only the total in- and out-degrees, ignoring reciprocity.
As an example of these measures, node B in \Fig{sampleGraph}
has $\drec[B]=2$, $\din[B] = 2$, $\dout[B] = 0$, $\dtin[B]=4$, and $\dtout[B]=2$.

We may also assemble corresponding degree distributions, as
follows. For any $d=0,1,\dots$, define
\begin{compactitem}
\item $\ndrec$ = Number of nodes with reciprocal-degree $d$,
\item $\ndin$ = Number of nodes with in-degree $d$,
\item $\ndout$ = Number of nodes with out-degree $d$,
\item $\ndtin$ = Number of nodes with total-in-degree $d$, and
\item $\ndtout$ = Number of nodes with total-out-degree $d$.
\end{compactitem}
Let $d_{\max}$ be the maximum of all possible degrees.
Then we can express $n$ and $m$ as
\begin{align*}
  n &= \sum_{d=0}^{\dmax} {\ndin} = \sum_{d=0}^{\dmax} {\ndout}  = \sum_{d=0}^{\dmax} {\ndrec}, \\
  m &= \sum_{d=1}^{\dmax} d \cdot {\ndin} + d \cdot {\ndrec}
  = \sum_{d=1}^{\dmax} d \cdot {\ndout} + d \cdot {\ndrec}.
\end{align*}
The reciprocity of a graph \cite{newman02} is
\begin{displaymath}
   r = \frac{\text{\# reciprocated edges}}{\text{\# edges}}
   =\frac{\sum_{d=1}^{\dmax} d \cdot {\ndrec} }{m}.
\end{displaymath}

We will present an extension of the Chung-Lu model that accounts for in- and out-degrees.
This will be a part of the final FRD generator.

\subsection{The Fast Directed Generator}

In this first step, we consider only the total in- and out-degrees and
ignore reciprocity. This can be thought of as a fast implementation
of the directed edge configuration model in~\cite{MeNePo06}.
We extend the Fast Chung-Lu (FCL) algorithm for undirected graphs \cite{SeKoPi12}.  This
is based on the idea that each edge creation can be done
independently if the degree distribution is given. The FCL reduces the
complexity of the CL model from $O(n^2)$ to $O(m)$, and the
same can be done in the directed case.

In the Chung-Lu model \cite{ChLu02}, after $m$ insertions (and assuming $\dtout \dtin[j] < m$ for all $i,j$) the probability of edge~$(i,j)$ is 
\begin{displaymath}
  p_{ij} = \frac{\dtout \dtin[j]}{m}.
\end{displaymath}

The naive approach flips a coin for each edge independently. The
``fast'' approach flips a coin to pick each endpoint.  The probability
of picking node $i$ as the source is proportional to $\dtout$ and the
probability of picking node $j$ as the destination is proportional to
$\dtin[j]$.

Our implementation works as described in \Alg{fdmodel}.  We first pick
all the source nodes and then all the sink nodes using the weighted
vertex selection described in \Alg{vertexselect}.  If we want 500
nodes with out-degree of $2$, for example, we create a ``degree-2 pool''
of 500 vertices and pick from it a total of 1000 times in expectation by
doing weighted sampling of the pools.  Within the pool, we pick a
vertex uniformly at random with the further expectation that each
vertex in the pool will be picked 2 times on average. In
\Alg{vertexselect}, the pool of degree-$d$ vertices is denoted by
$\mathcal{P}_d$ and the likelihood that the $d$th pool is selected is
denoted by $w_d$. In all cases except $d=1$, the size of the pool is
defined by the number of vertices of that degree and the weight of the
pool is the number of edges that should be in that pool. The one
exception is the degree-1 pool, which has a \emph{blowup} factor
$b$. For now, assume $b=1$; we explain its importance further on in
\Sec{blowup}.  At the end of \Alg{vertexselect}, we randomly relabel
the vertices so there is no correlation between the degree and vertex
identifier.

The FD method can produce repeat edges, unlike the naive version that
flips $n^2$ weighted coins (one per edge). Nevertheless, this has not
been a major problem in our experience. Another alternative to
\Alg{vertexselect} is to put $d$ copies of each degree-$d$ vertex into
a long array and then randomly permute it---this is the approach of
the \emph{edge configuration} model. This gives the \emph{exact}
specified degree distribution (excepting possible repeats) by using a random permutation of a length $m_*$ array. This would
produce very similar results to what we show here, and is certainly a
viable alternative. We also mention an alternate way of generating Chung-Lu graphs that could be adapted for the directed case \cite{MiHa11}.

\begin{algorithm}[ht]
  \caption{Fast Directed Graph Model}
  \label{alg:fdmodel}
  \begin{algorithmic}
    \Procedure{FDModel}{$G$,$\btin$,$\btout$}
    \State Calculate $\set{\ndtin}$ and $\set{\ndtout}$ for $G$
    \State $\set{i_k} \gets \textsc{VertexSelect}(\set{\ndtout}, \btout)$
    \State $\set{j_k} \gets \textsc{VertexSelect}(\set{\ndtin}, \btin)$
    \State $E \gets \set{(i_k, j_k)}$
    \State \emph{Remove self-links and duplicates from $E$}
    \State \Return{$E$}
    \EndProcedure
  \end{algorithmic}
\end{algorithm}

\begin{algorithm}[ht]
  \caption{Weighted Vertex Selection}
  \label{alg:vertexselect}
  \begin{algorithmic}
    \Procedure{VertexSelect}{$\set{n_d}$, $b$}
    \State $n \gets \sum_{d=0}^{\dmax} n_d$
    \State $n_* \gets b \cdot n_1 + \sum_{d=2}^{\dmax} n_d$
    \State $m \gets \sum_{d=1}^{\dmax} d \cdot n_d$
    \State $\mathcal{P} = \set{1,\dots,n_*}$
    \ForAll{$d=1,\dots,\dmax$}
    \State $w_d \gets d \cdot n_d / m$
    \If{$d>1$}
    \State $\mathcal{P}_d \gets $\text{$n_d$ vertices from $\mathcal{P}$}
    \Else
    \State $\mathcal{P}_1 \gets $\text{$b \cdot n_1$ vertices from $\mathcal{P}$}
    \EndIf
    \State $\mathcal{P} \gets \mathcal{P} \setminus \mathcal{P}_d$
    \EndFor
    \ForAll{$k=1,\dots,m$}
    \State $\hat d_k \gets$ Random degree in $\set{1,\dots,\dmax}$,
    \State \phantom{$\hat d_k \gets$ }proportional to weights $\set{w_d}$
    \State $i_k \gets$ Uniform random vertex in $\mathcal{P}_{\hat d_k}$
    \EndFor
    \State $\mathcal{P} \gets$ unique indices in $\set{i_k}_{k=1}^m$
    \State $\pi \gets $ Random mapping from $\mathcal{P}$ to $\set{1,\dots,n}$
    \State \Return{$\set{\pi(i_k)}_{k=1}^{m}$}
    \EndProcedure
  \end{algorithmic}
\end{algorithm}

\subsection{Introducing reciprocity}

The FD model generates a directed graph and matches to the total in-
and out-degree distributions. However, it produces virtually no
reciprocal edges. The FRD null model explicitly introduces reciprocity
using an undirected model and uses FD for remaining directed edges.
We blend the two schemes in one model. In this case, we explicitly consider the three distributions,
$\set{\ndrec}$, $\set{\ndin}$, and $\set{\ndout}$.
The method is presented in \Alg{frdmodel}.

\begin{algorithm}[ht]
  \caption{Fast Reciprocal Directed Graph Model}
  \label{alg:frdmodel}
  \begin{algorithmic}
    \Procedure{FDModel}{$G$,$\brec$, $\bin$,$\bout$}
    \State Calculate $\set{\ndrec}$, $\set{\ndin}$, and $\set{\ndout}$ for $G$
    \State $\set{i_k} \gets \textsc{VertexSelect}(\set{\frac{1}{2} \ndrec}, \brec)$
    \State $\set{j_k} \gets \textsc{VertexSelect}(\set{\frac{1}{2} \ndrec}, \brec)$
    \State $E_1 \gets \set{(i_k, j_k), (j_k,i_k)}$
    \State $\set{i_l} \gets \textsc{VertexSelect}(\set{\ndout}, \bout)$
    \State $\set{j_l} \gets \textsc{VertexSelect}(\set{\ndin}, \bin)$
    \State $E_2 \gets \set{(i_l, j_l)}$
    \State $E \gets E_1 \cup E_2$
    \State \emph{Remove self-links and duplicates from $E$}
    \State \Return{$E$}
    \EndProcedure
  \end{algorithmic}
\end{algorithm}

\subsection{Fixing the Number of Degree-1 Nodes}
\label{sec:blowup}
Below, we present our arguments for the case of the in-degree, but the same arguments
applied to out or reciprocal degrees. 
We use just the notation $d$ to denote the in-degree, for simplicity.

If we run \textsc{VertexSelect} (\Alg{vertexselect}) repeatedly,
always assigning the same ids to each vertex pool and omitting the
random relabeling ($\pi$) at the end, each node will get its desired
in-degree \emph{on average across multiple runs}.  For any single run,
however, this will not be the case. In fact, the degrees are Poisson
distributed.

\begin{claim}
  The probability that a vertex $v$ in pool $\mathcal{P}_d$ is selected $x$ times is
  \begin{displaymath}
    \Prob{ \text{$v$ selected $x$ times} | v \in \mathcal{P}_d}
    = \frac{d^x e^{-d}}{x!}.
  \end{displaymath}
\end{claim}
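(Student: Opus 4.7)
The plan is to compute the per-iteration probability that $v$ is chosen, recognize the resulting count as a binomial random variable, and then invoke the classical Poisson limit.

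First, I would analyze a single iteration $k$ of the main \texttt{for} loop in \Alg{vertexselect}. Conditioned on $v \in \mathcal{P}_d$ with $d \geq 2$, the algorithm first draws a degree $\hat d_k$ proportional to the weights $w_d = d\,n_d/m$, and then picks a vertex uniformly from $\mathcal{P}_{\hat d_k}$. Hence the probability that $v$ is selected in iteration $k$ is
\begin{equation*}
\Prob{i_k = v \mid v \in \mathcal{P}_d} \;=\; \frac{d\,n_d}{m}\cdot \frac{1}{n_d} \;=\; \frac{d}{m}.
\end{equation*}
(With $b = 1$, the same derivation yields $1/m$ for $d=1$, giving the correct parameter.)

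Next, since each of the $m$ iterations is independent and each has selection probability $d/m$, the number $X$ of times $v$ is picked is distributed as $\mathrm{Binomial}(m, d/m)$. Its mass function is
\begin{equation*}
\Prob{X = x} \;=\; \binom{m}{x}\left(\frac{d}{m}\right)^{\!x}\!\left(1 - \frac{d}{m}\right)^{\!m-x}.
\end{equation*}
Then I would take the standard Poisson limit: as $m \to \infty$ with $d$ fixed, $\binom{m}{x}(d/m)^x \to d^x/x!$ and $(1 - d/m)^{m-x} \to e^{-d}$, yielding
\begin{equation*}
\Prob{X = x} \;\longrightarrow\; \frac{d^x e^{-d}}{x!}.
\end{equation*}

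The only real subtlety—and the thing to be careful about—is that the claimed Poisson formula is strictly an asymptotic statement: for finite $m$ the distribution is exactly Binomial$(m, d/m)$, and the Poisson form arises in the regime $m \gg d$ which is the intended setting (the graphs have millions of edges and the per-vertex degrees are much smaller). I would state this interpretation explicitly, so that the claim is understood as the law of large $m$ limit. No independence concerns arise between iterations, since the main loop in \Alg{vertexselect} makes the $m$ draws independently, which is precisely what makes the binomial (and its Poisson limit) applicable.
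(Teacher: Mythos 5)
Your proposal is correct, and it is in fact more careful than the paper's own justification. The paper argues in two stages: pool $\mathcal{P}_d$ is expected to be selected $d\cdot n_d$ times in total, so each of its $n_d$ elements is selected $d$ times on average, and that average is then simply declared to be the Poisson parameter (with a parenthetical handwave that the variance in the pool-level count ``should be small enough'' not to matter). You instead work per iteration: each of the $m$ independent draws selects $v$ with probability $w_d\cdot(1/n_d) = d/m$, so the count is exactly $\mathrm{Binomial}(m, d/m)$, and the stated formula is its Poisson limit as $m\to\infty$. This buys two things the paper's sketch does not: it eliminates the need to reason about (or apologize for) fluctuations in how often the pool itself is chosen, since you never condition on that intermediate quantity; and it makes explicit that the claim is an approximation valid for $m \gg d$ rather than an exact identity, which the paper's phrasing obscures. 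The paper's version is shorter and conveys the right intuition (mean $d$, hence Poisson$(d)$), but your derivation is the one you would want if you actually had to defend the claim. One very minor point: for $d=1$ with a general blowup factor $b$ the pool has $b\cdot n_1$ elements and the per-trial probability becomes $1/(bm)$, which is exactly what feeds the later claim with parameter $1/b$; you note the $b=1$ case, which is all this claim requires.
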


This claim is easy to see. We expect that pool $\mathcal{P}_d$ will be
selected $w_d = d\cdot n_d$ times. Therefore, each element of
$\mathcal{P}_d$ will be selected an average of $d$ times, so that is
the Poisson parameter. (There may be some small variance in the number
of times that each pool is selected, but the variance should be small
enough not to greatly impact the average degree.)

The effect of the Poisson distribution is particularly noticeable in
the pool of degree-1 nodes where the probability that a node in
$\mathcal{P}_1$ has in-degree $x=1$ is only 36\%. An additional
36\% will have an in-degree of $x=0$ and the remaining 28\% will an
in-degree of $x \geq 2$.  Of course, there will be some contributions
from the other pools, e.g., $\mathcal{P}_2$ will produce 27\% degree-1
nodes. However, in a power law degree distribution, $n_2 \ll n_1$ so
its contribution is small. Nevertheless, we can calculate the expected
number of degree-$x$ nodes by summing over the contributions across
all degrees pools.
\begin{claim}
  Let $n_x'$ denote the number of nodes that are selected exactly $x$
  times. Then
  \begin{displaymath}
    \mathbb{E}(n_x') = \sum_d n_d \frac{d^x e^{-d}}{x!}.
  \end{displaymath}
\end{claim}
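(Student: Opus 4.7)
The plan is to deduce this claim directly from the previous claim by linearity of expectation, partitioning vertices by the pool that contains them. Although the selections of different vertices within the same pool are not independent (the total number of draws from $\mathcal{P}_d$ is essentially fixed at $w_d = d \cdot n_d$), linearity of expectation does not require independence, so only per-vertex probabilities are needed.

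First, I would introduce indicator random variables: for each vertex $v$, let $X_v = 1$ if $v$ is selected exactly $x$ times during the execution of \textsc{VertexSelect}, and $0$ otherwise. Then $n_x' = \sum_v X_v$, and by linearity of expectation,
\begin{displaymath}
\mathbb{E}(n_x') = \sum_v \Pr[\text{$v$ is selected exactly $x$ times}].
\end{displaymath}

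Next, I would partition the sum over vertices according to which pool each vertex belongs to. Every vertex lies in exactly one pool $\mathcal{P}_d$, and there are $n_d$ vertices in $\mathcal{P}_d$ (for the degree-$1$ case, using $b=1$; the blowup case is handled separately in Section~\ref{sec:blowup}). Applying the previous claim, each vertex in $\mathcal{P}_d$ is selected exactly $x$ times with probability $d^x e^{-d}/x!$, so
\begin{displaymath}
\mathbb{E}(n_x') \;=\; \sum_{d} \sum_{v \in \mathcal{P}_d} \frac{d^x e^{-d}}{x!} \;=\; \sum_d n_d \, \frac{d^x e^{-d}}{x!},
\end{displaymath}
which is the desired formula.

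The main subtlety, rather than a genuine obstacle, is that the previous claim is itself only approximate: it treats the number of times pool $\mathcal{P}_d$ is selected as exactly $w_d = d \cdot n_d$, whereas in \Alg{vertexselect} this count is itself a (multinomial) random variable with mean $w_d$. The authors already flag this in the paragraph following the first claim, arguing the variance is small. So in the proof I would simply invoke the previous claim as given and rely on linearity; no additional technical step is required beyond decomposing by pool.
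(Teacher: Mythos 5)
Your proof is correct and fills in exactly the argument the paper leaves implicit (the authors simply state the claim ``is easy to see'' and omit the proof): indicator variables plus linearity of expectation, partitioned by pool, invoking the Poisson probability from the preceding claim. Your remark that the preceding claim is itself an approximation (since the number of draws from each pool is random with mean $w_d$) correctly identifies the same caveat the authors acknowledge in the surrounding text.
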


Again, the claim is easy to see and so the proof is omitted.

For many real-world distributions, $n_1' \ll n_1$.  We propose a
workaround to this problem --- we would like to reduce the number of
nodes in $\mathcal{P}_1$ that are selected multiple times. To do this,
we increase the size of the pool via a \emph{blowup factor} $b$, which
is used as follows.  Let $\mathcal{P}_1$ contain $b \cdot n_1$
nodes. The weight of the pool will not change, meaning that it will
still be selected $n_1$ times. Therefore, we may make the following
claim.

\begin{claim}
  The probability that a vertex $v$ in pool $\mathcal{P}_1$ with $b
  \cdot n_1$ elements is selected $x$ times is
  \begin{displaymath}
    \Prob{ \text{$v$ selected $x$ times} | v \in \mathcal{P}_1}
    = e^{-1/b}/{(b^x \cdot x!)}.
  \end{displaymath}
  Furthermore, the expected number of nodes in $\mathcal{P}_1$ that are
  selected exactly one time is $n_1 \cdot e^{-1/b}$. Hence, letting
  $n_x'$ denote the number of nodes that are selected exactly $x$
  times, we have
  \begin{displaymath}
    \mathbb{E}(n_x') = n_1 \cdot \frac{e^{-1/b}}{b^{x-1} \cdot x!} + \sum_{d>1} n_d \frac{d^x e^{-d}}{x!}.
  \end{displaymath}
\end{claim}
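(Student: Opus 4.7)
The plan is to mirror the Poisson argument that justifies the previous claim, but with two quantities decoupled: the number of draws from $\mathcal{P}_1$ (still $w_1 \cdot m = n_1$ in expectation) and the size of $\mathcal{P}_1$ (now $b \cdot n_1$ rather than $n_1$). I would first reiterate that, conditional on the pool sequence having $n_1$ draws hitting $\mathcal{P}_1$, each such draw picks a vertex uniformly from the $b \cdot n_1$ elements, independently of the others. Hence for a fixed $v \in \mathcal{P}_1$, the number of times $v$ is selected is $\text{Binomial}(n_1, 1/(b n_1))$, which in the relevant regime $n_1 \gg 1$ is well-approximated by a Poisson distribution with mean $\lambda = n_1 \cdot 1/(b n_1) = 1/b$. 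Substituting into the Poisson pmf gives
\begin{equation*}
\Prob{v \text{ selected } x \text{ times} \mid v \in \mathcal{P}_1} = \frac{(1/b)^x e^{-1/b}}{x!} = \frac{e^{-1/b}}{b^x \cdot x!},
\end{equation*}
which is the first formula.

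Next, for the expected number of vertices in $\mathcal{P}_1$ selected exactly once, I would sum the Bernoulli indicators over all $b \cdot n_1$ members of $\mathcal{P}_1$. By linearity of expectation and the $x=1$ case of the formula just derived,
\begin{equation*}
\EX[\text{\# singletons in } \mathcal{P}_1] = b \cdot n_1 \cdot \frac{e^{-1/b}}{b \cdot 1!} = n_1 \cdot e^{-1/b}.
\end{equation*}
This also explains the purpose of the blowup: as $b \to \infty$, the expected number of selections exactly once approaches $n_1$, recovering the target count.

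Finally, for the full formula for $\EX(n_x')$, I would decompose by which pool a vertex lies in. The pools $\mathcal{P}_d$ for $d > 1$ are unchanged from the previous claim, each contributing $n_d \cdot d^x e^{-d}/x!$. The pool $\mathcal{P}_1$ now contains $b \cdot n_1$ vertices, each selected exactly $x$ times with probability $e^{-1/b}/(b^x \cdot x!)$, so its contribution is
\begin{equation*}
b \cdot n_1 \cdot \frac{e^{-1/b}}{b^x \cdot x!} = n_1 \cdot \frac{e^{-1/b}}{b^{x-1} \cdot x!}.
\end{equation*}
Adding these pieces produces exactly the stated expression for $\EX(n_x')$.

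The only genuinely nontrivial point is the Poisson approximation step, and this is already the implicit basis of the previous claim: strictly, the counts are Binomial rather than Poisson, and the number of draws that land in $\mathcal{P}_1$ itself is Binomial (not deterministically $n_1$). For large $n_1$ both approximations have vanishing total-variation error, so the formulas hold up to lower-order terms; given the authors' wording (and their omission of the analogous proof for the earlier claim) I would be explicit about invoking the same level of approximation here rather than trying to upgrade it to an exact identity.
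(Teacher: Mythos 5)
Your proposal is correct and follows essentially the same route as the paper: the Poisson parameter becomes $n_1/(b\cdot n_1)=1/b$ because the pool is still drawn $n_1$ times but now has $b\cdot n_1$ elements, the singleton count comes from multiplying the pool size by the $x=1$ probability, and the expectation formula just swaps in the modified first-pool term. Your added caveat that the counts are strictly Binomial and only approximately Poisson is a fair point the paper leaves implicit, but it does not change the argument.
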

\begin{proof}
  We still pick pool $\mathcal{P}_1$ a total of $n_1$ times, so that
  average (i.e., the Poisson parameter) for this pool is now reduced
  to $n_1/(n_1 \cdot b) = 1/b$ since there are $b \cdot n_1$ elements.

  The next equation comes from the fact that there are $b \cdot n_1$
  nodes in the pool, so we multiply the number of nodes with the probability of
  being picked $x$ times with $x=1$ to determine the expected number. 

  Finally, the revised expectation comes from changing the formula for
  the first pool to account for the enlarged pool size.
\end{proof}

If we choose, for example, $b=10$, then we can expect that $0.9 \cdot
n_1$ nodes in $\mathcal{P}_1$ to be selected exactly one time.  We
show an example of the impact of this modification in
\Fig{degree1fix}, where we show the total in-degree for soc-Epinions1
with and without a blowup factor of $b=10$.  The degrees are
logarithmicly binned and summed.  Note that the match for the number
of degree-1 nodes is improved, but there is a small penalty in the match
for degree-2 nodes. We use $b=10$ in all experiments reported in this paper.

\begin{figure}[htb]
  \centering
  \includegraphics[width=2in]{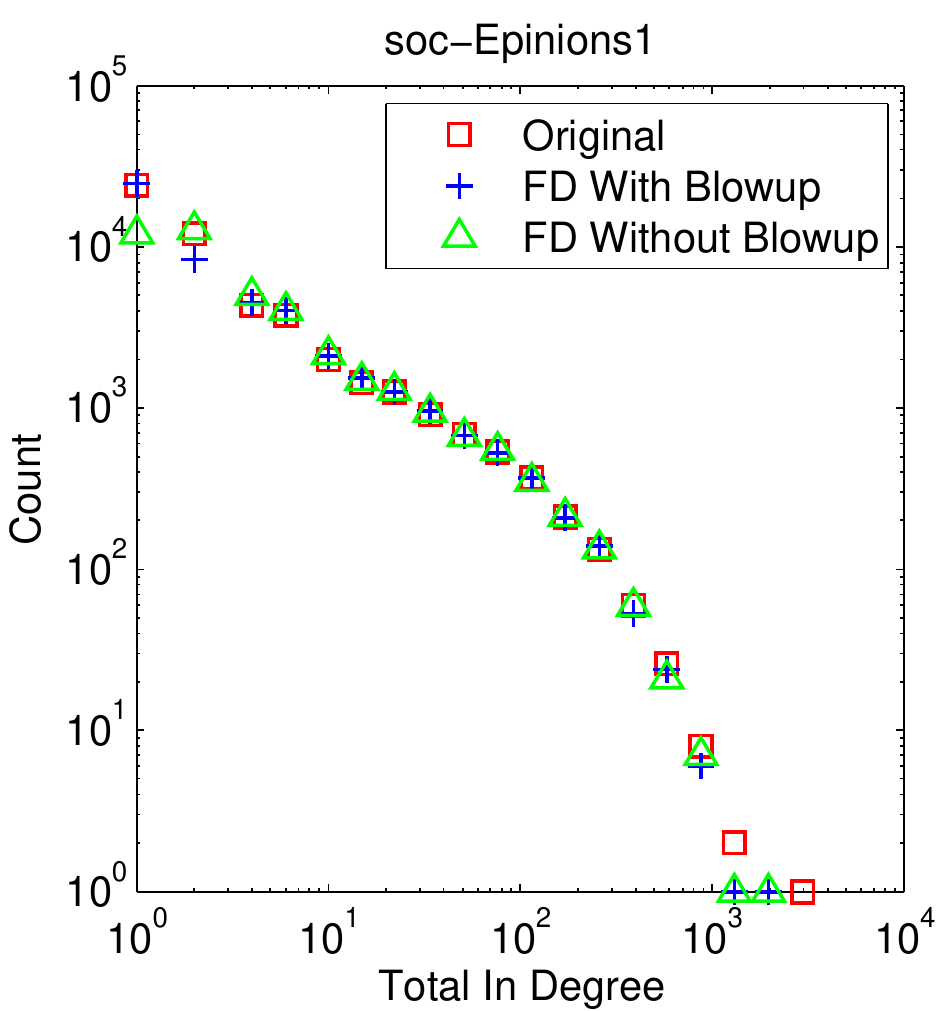}
  \caption{Example of in-degree distribution with and without blowup factor. Note that the model with the blow-up factor matches degree-1 nodes precisely, however, the model without blow-up generates only half of the degree-1 nodes in the original graph.}
  \label{fig:degree1fix}
\end{figure}

\section{Experimental Studies}

We test our models on various directed networks such as
citation (cit-HepPh), web (web-NotreDame), and social (soc-Epinions1,
soc-LiveJournal) \cite{Snap}. We also test our models on large scale
graphs coming from online social networks (youtube, flickr,
liveJournal) \cite{MiMaGu07}. We list the
attributes of the networks in \Tab{networks} after removing self-links
and making the graph unweighted (simple). As expected, the reciprocity $r$ is very low in the citation network.
We elaborate how we fit the models to the real networks below.

\begin{table*}[htb]
  \caption{Networks used in this study. $r$ is the reciprocity, $p_f$ is the forward burning parameter for FF, and the last column is the SKG initiator matrix.}
\begin{center}\small
\begin{tabular}{|l | r | r |r|r| r| r | }
\hline
\TTF{Graph Name}	& \TT{Nodes}  & \TT{Edges} & \TT{Rec.~Edges} & \TT{$r$}  & \TT{$p_f$} & \TT{SKG initiator} \\ \hline
cit-HepPh \cite{Snap} 	& 34K & 421K	&1,314	&0.003 & 0.37& [0.990,0.440;0.347,0.538] \cite{LeChKlFa10}\\
\hline
soc-Epinions1 \cite{Snap} &76K &508K &206K     &0.405 & 0.346 &  [0.999,0.532;0.480,0.129] \cite{LeChKlFa10}\\
\hline
web-NotreDame \cite{Snap} 	& 325K	&1,469K &  759K & 0.517 &0.355 & [0.999,0.414;0.453,0.229] \cite{LeChKlFa10} \\
\hline
soc-LiveJournal \cite{Snap} &4,847K&	68,475K  & 32,434K & 0.632 & 0.358 & [0.896,0.597;0.597,0.099] \cite{Xu10}\\
\hline
youtube \cite{MiMaGu07}    &1,157K	&4,945K	&3,909K	&0.791 & 0.335& ---\\
\hline
flickr \cite{MiMaGu07}	    &1,861K	&22,613K	&14,117K	&0.624 & 0.355& ---\\
\hline
LiveJournal \cite{MiMaGu07}	&5,284K	&77,402K	&56,920K	&0.735 & 0.355 & --- \\
\hline
 \end{tabular}
\end{center}
\label{tab:networks}
\end{table*}

\paragraph{ Fast Directed (FD) and Fast Reciprocal Directed (FRD)}
  This only requires the appropriate degree distributions
  of the input graphs. We used a blowup factor of $b=10$ in all cases.

\paragraph{Forest Fire (FF)} We provide the number of nodes $n$, and
  the forward and backward burning probabilities $p_f$ and $p_b$ to
  the SNAP software \cite{Snap}. To fit FF, we picked  parameters  
  that best match the number of edges in the real networks. For each target graph,
  we search a range of values by incrementing $p_f$ value by $\delta
  p= 0.001$ in range [0.2-0.5] to find the best parameters, which 
  are reported in \Tab{networks}. We set $p_b=0.32$ as described
  in \cite{LeKlFa05}.

\paragraph{Stochastic Kronecker Graphs (SKG)} We
  use the initiator matrices reported by previous studies: \cite{LeChKlFa10} for
  cit-HepPh, soc-Epinions, and web-NotreDame and \cite{Xu10} for
  soc-LiveJournal. We attempted to generate initiator matrices for large graphs using \cite{Snap}, but
the program did not terminate within twenty-four hours. Therefore, we only fit SKG to the networks obtained from SNAP\cite{Snap} data warehouse. We set the size
  of the final adjacency matrix as $2^{\lceil \log_2(n) \rceil}$, where $n$ is the number of nodes in the real graph.

We generate all the models in a Linux machine with 12GB memory and
Intel Xeon 2.7 Ghz processor. The FD and FRD methods were
implemented in MATLAB. For  SKG and Forest Fire, we used  the C++
implementations in \cite{Snap}.  Graph generation time for each
model is listed in \Tab{time}. For fair comparison, we do not include I/O times. Among all of the results, FD and FRD
are the fastest, in that order. SKG is little bit slower than both FD and
FRD models. The forest fire is the slowest even though C++  codes are typically  much
faster than MATLAB codes.

\begin{table}[htb]
\caption{Graph generation times}
\begin{center}\small
\begin{tabular}{|l | r | r |r| r| }
\hline
\TTF{Graph Name}	& \TT{SKG}  & \TT{FD} 	&\TT{FRD}  &	\TT{FF} \\  \hline
cit-HepPh &2.17s &0.16s &0.19s & 18.80s\\
\hline
soc-Epinions  &1.53s  &0.29s & 0.41s & 6.73s\\
\hline
web-NotreDame  &4.95s & 0.56s & 0.62s & 29.66s \\
\hline
soc-LiveJournal & 6m51s  & 31.15s & 41.75s &2h28m32s \\
\hline
youtube  &\TC{---} &2.16s & 2.53s & 2m22s\\
\hline
flickr  &\TC{---} &10.30s & 12.20s & 1h11m2s\\
\hline
liveJournal  & \TC{---} &  35.30s & 59.98s & 8h30m18s\\
\hline
 \end{tabular}
\end{center}
\label{tab:time}
\end{table}

We analyze the number of reciprocal edges generated by each model in \Tab{reciprocal}. The FF model cannot generate any reciprocal edges. The FD model can generate a few random reciprocal edges but their number is negligible. The SKG model generates some reciprocal edges; yet a negligible amount. The FRD model performs the best and generate expected amount of reciprocal edges.

\begin{table}[htb]
\caption{Number of reciprocal edges created by each model}
\begin{center}\small
\begin{tabular}{|l | r | r |r| r| r|  }
\hline
\TTF{Graph Name}	& \TT{Orig.} & \TT{SKG}  & \TT{FD} 	&\TT{FRD}  &\TT{FF} \\ \hline 
cit-HepPh  &1,314 & 1,302 &160  &1,454 & 0 \\ \hline
 soc-Epinions1 &206K & 1,264 &114 & 205K& 0\\  \hline
 web-NotreDame  &759K& 766&28 & 757K& 0 \\  \hline
 soc-LiveJournal &32,434K& 16,520 &172& 32,432K& 0\\  \hline
 youtube &3,909K& \TC{---} & 18 & 3,873K& 0\\  \hline
 flickr &14,117K& \TC{---} & 222&14,032K & 0\\  \hline
 liveJournal &56,920K & \TC{---}&262 &56,454K & 0\\ \hline
\end{tabular}
\end{center}
\label{tab:reciprocal}
\end{table}

\setcounter{topnumber}{3}
\setcounter{dbltopnumber}{3}
\renewcommand{\topfraction}{.9}
\renewcommand{\dbltopfraction}{.9}

\begin{figure*}[htb]
  \centering
  \subfloat{\label{fig:irdd-soc-Epinions1}
  \includegraphics[width=1.9in,trim=0 0 0 0]{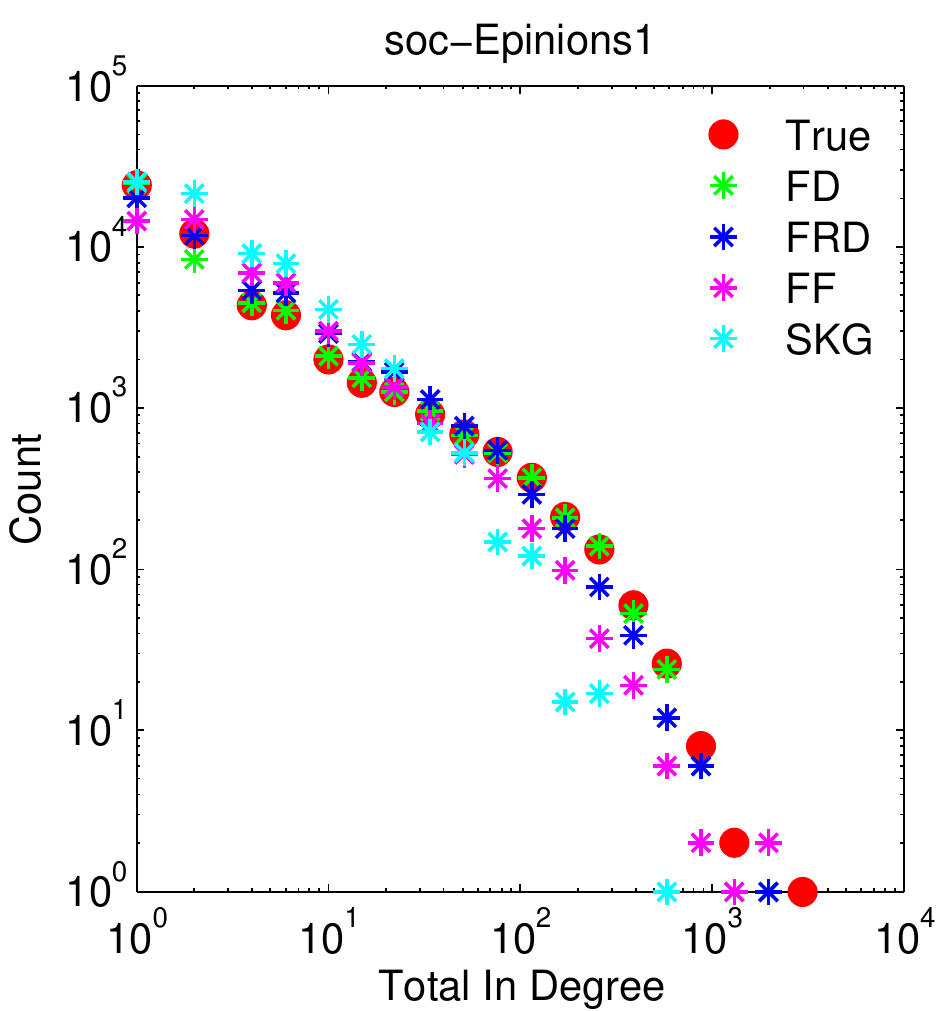}
  }
  \subfloat{\label{fig:ordd-soc-Epinions1}
    \includegraphics[width=1.9in,trim=0 0 0 0]{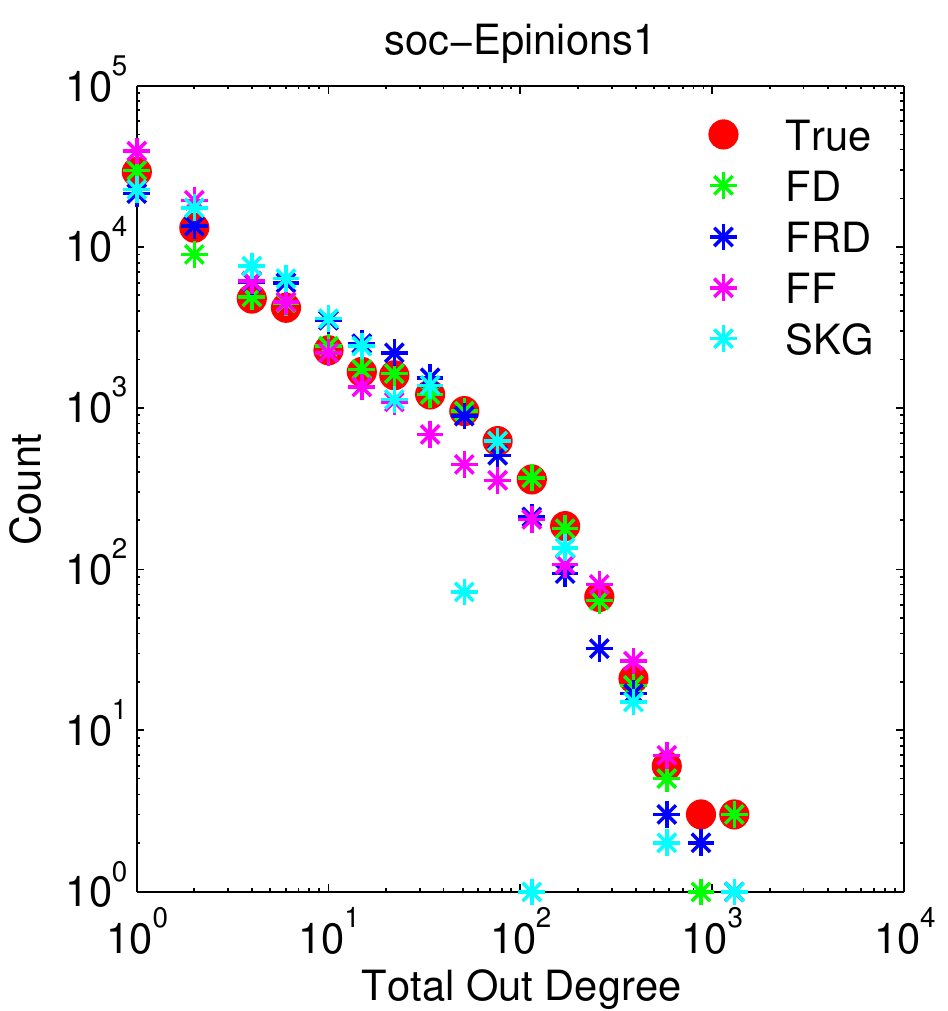}
  }
  \subfloat{\label{fig:rdd-soc-Epinions1}
    \includegraphics[width=1.9in,trim=0 0 0 0]{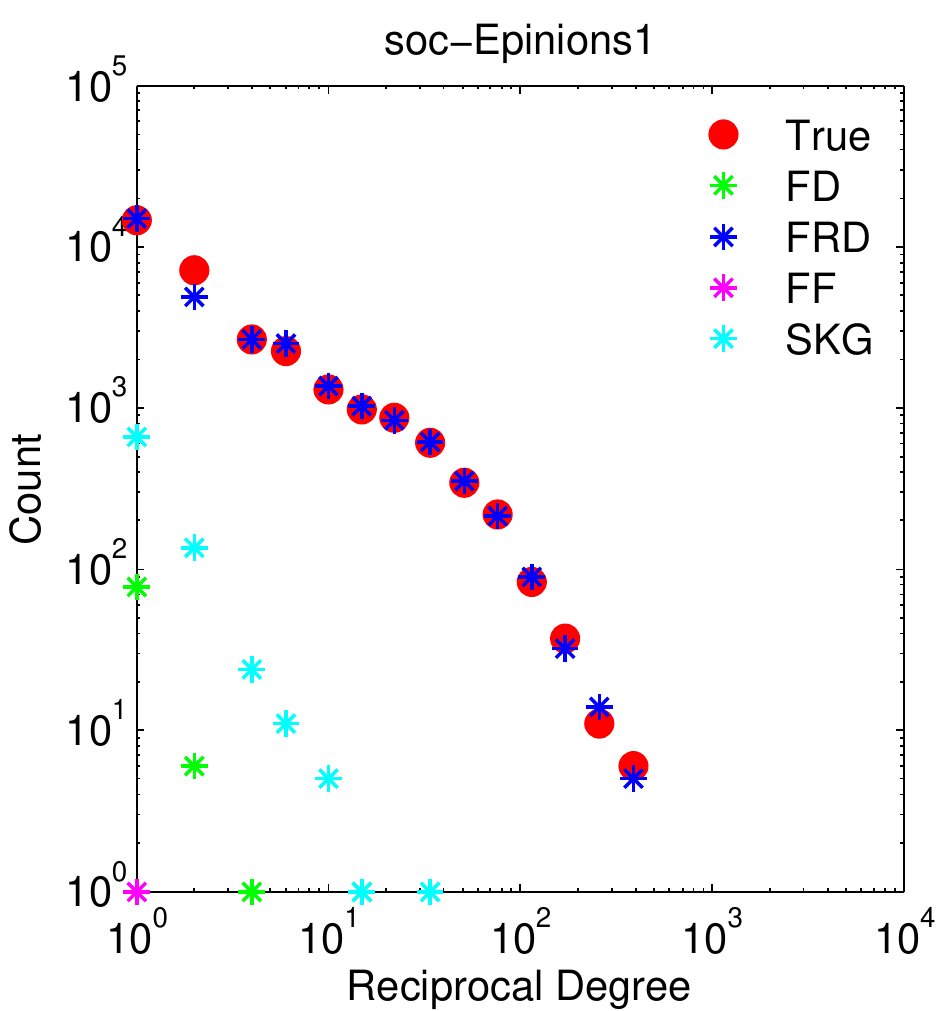}
  }
  \caption{Comparisons of degree distributions produced by various models for graph soc-Epinions1.}
  \label{fig:soc-Epinions1}
\end{figure*}
\begin{figure*}[htb]
  \centering
  \subfloat{\label{fig:irdd-soc-LiveJournal}
  \includegraphics[width=1.9in,trim=0 0 0 0]{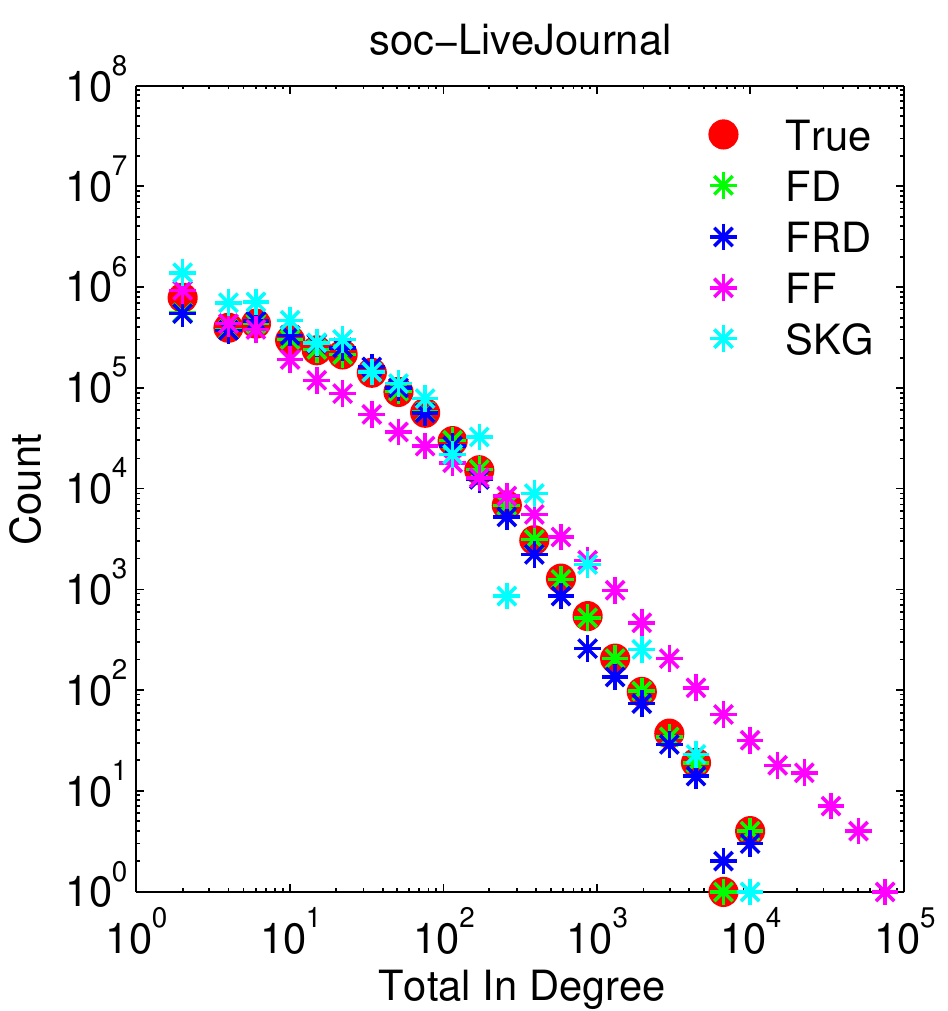}
  }
  \subfloat{\label{fig:ordd-soc-LiveJournal}
    \includegraphics[width=1.9in,trim=0 0 0 0]{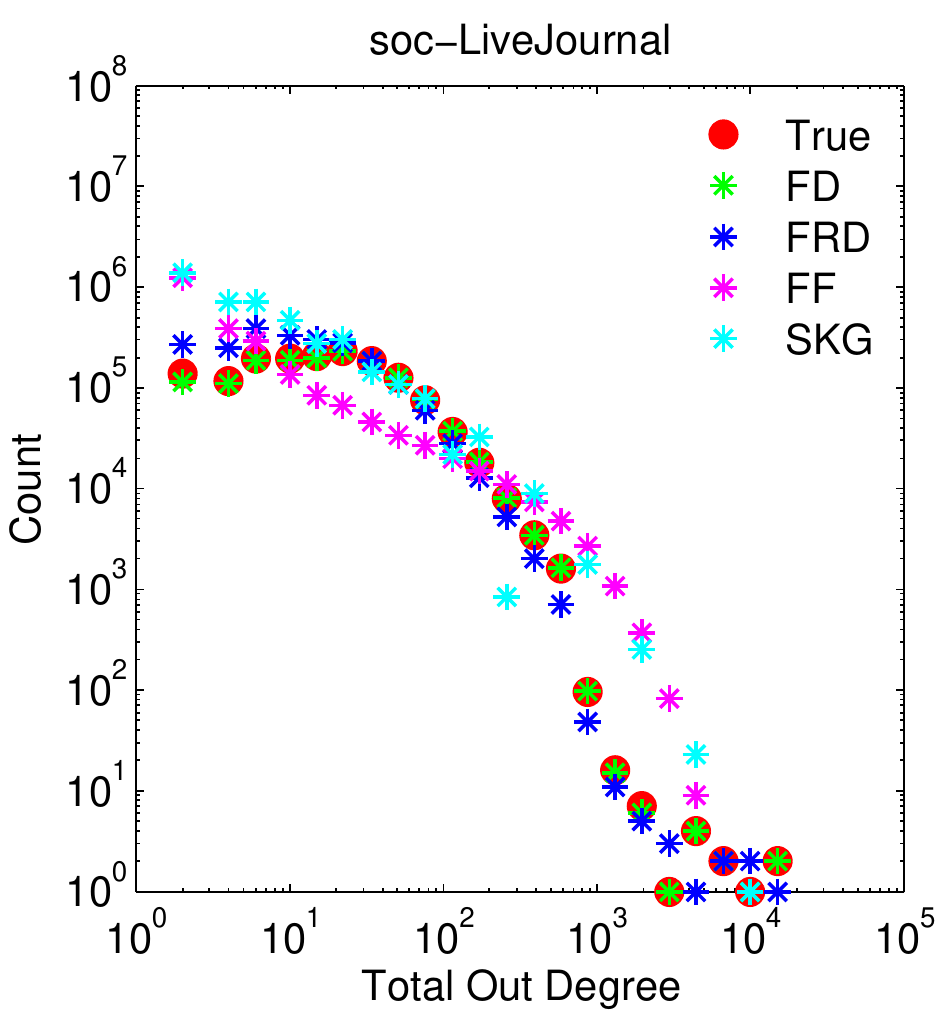}
  }
  \subfloat{\label{fig:rdd-soc-LiveJournal}
    \includegraphics[width=1.9in,trim=0 0 0 0]{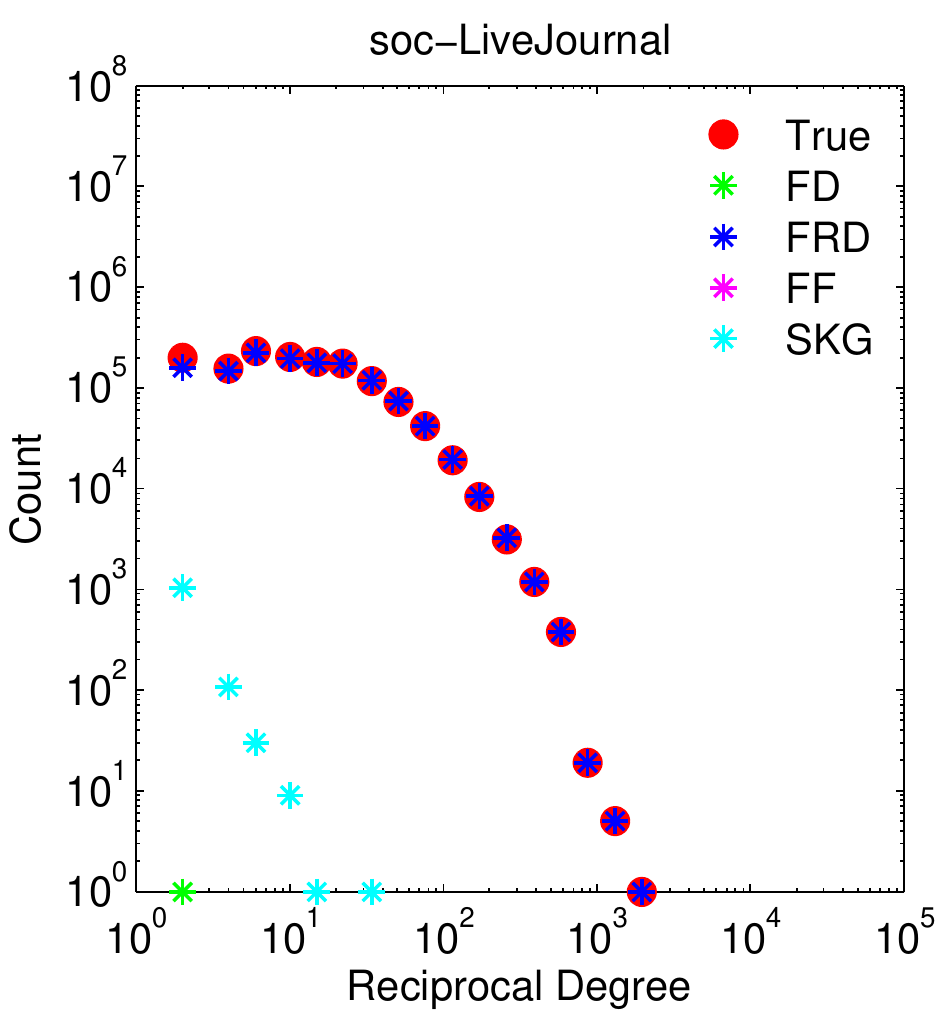}
  }
  \caption{Comparisons of degree distributions produced by various models for graph soc-LiveJournal.}
  \label{fig:soc-LiveJournal}
\end{figure*}
\begin{figure*}[htb]
  \centering
  \subfloat{\label{fig:irdd-youtube}
  \includegraphics[width=1.9in,trim=0 0 0 0]{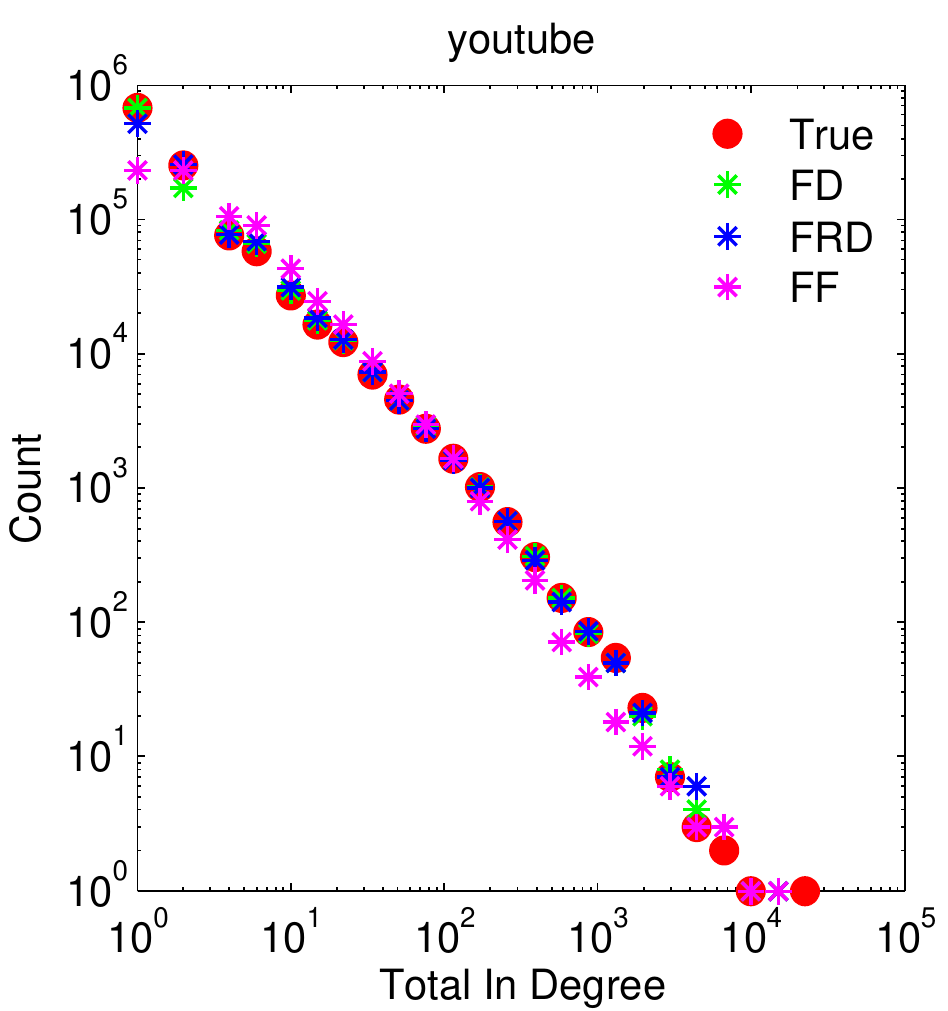}
  }
  \subfloat{\label{fig:ordd-youtube}
    \includegraphics[width=1.9in,trim=0 0 0 0]{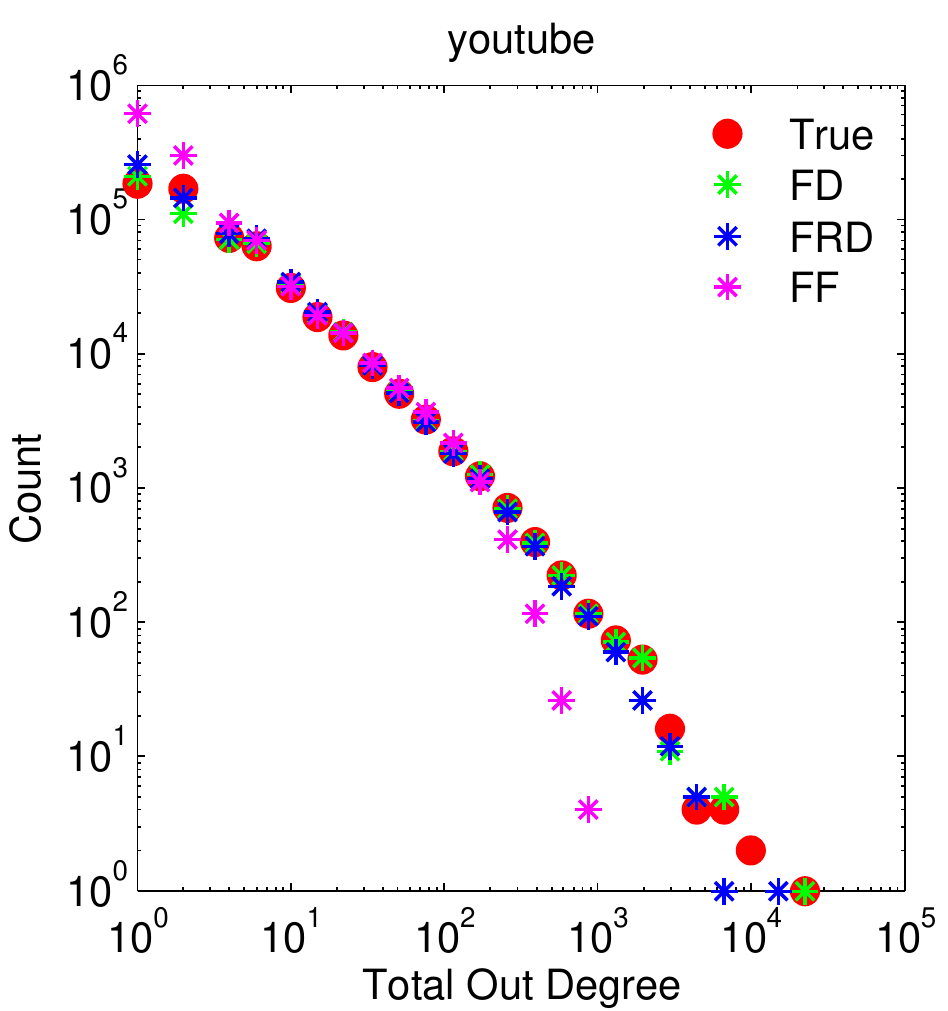}
  }
  \subfloat{\label{fig:rdd-youtube}
    \includegraphics[width=1.9in,trim=0 0 0 0]{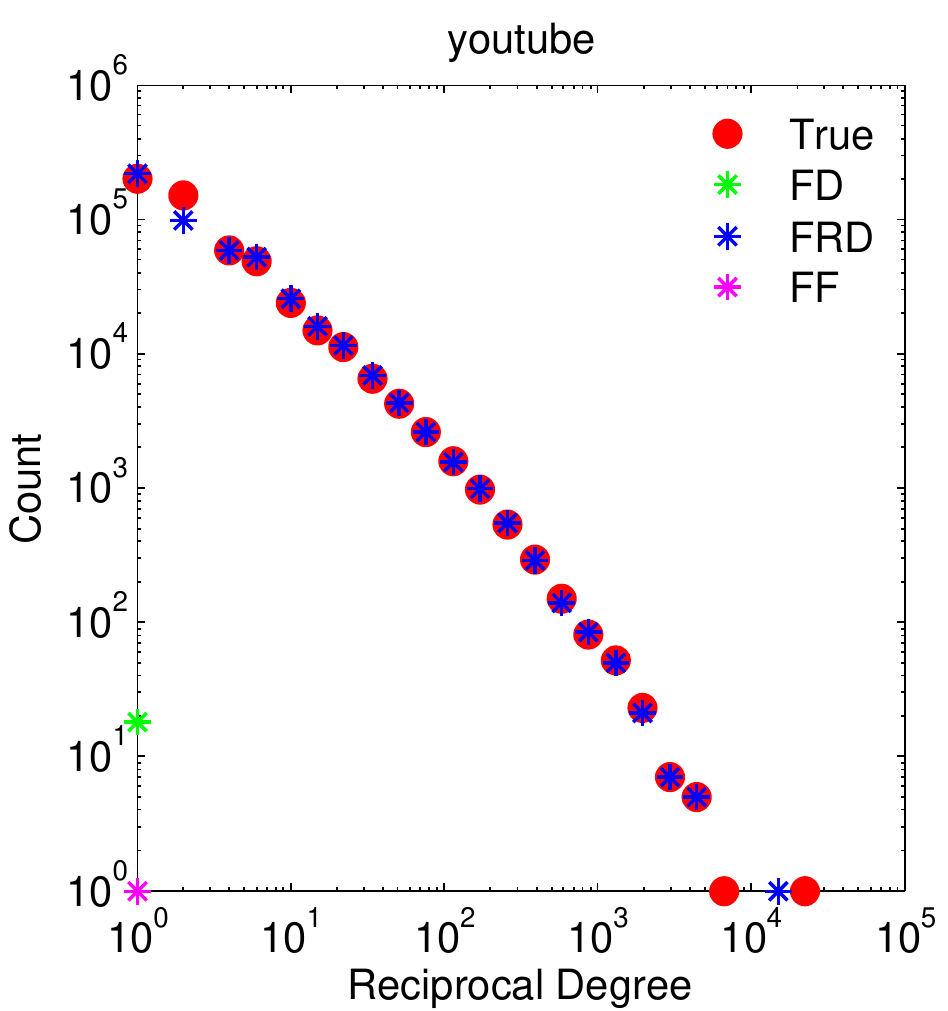}
  }
  \caption{Comparisons of degree distributions produced by various models for graph youtube.}
  \label{fig:youtube}
\end{figure*}
\begin{figure*}[htb]
  \centering
  \subfloat{\label{fig:irdd-flickr}
  \includegraphics[width=1.9in,trim=0 0 0 0]{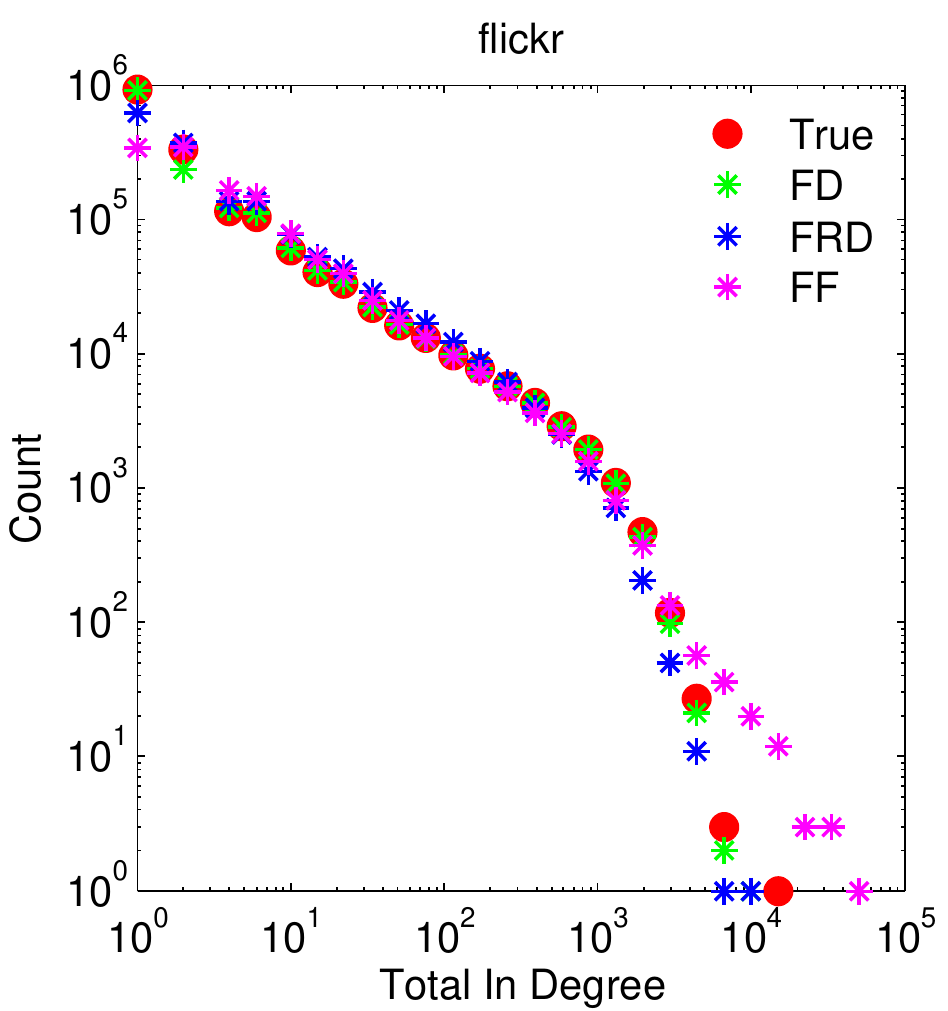}
  }
  \subfloat{\label{fig:ordd-flickr}
    \includegraphics[width=1.9in,trim=0 0 0 0]{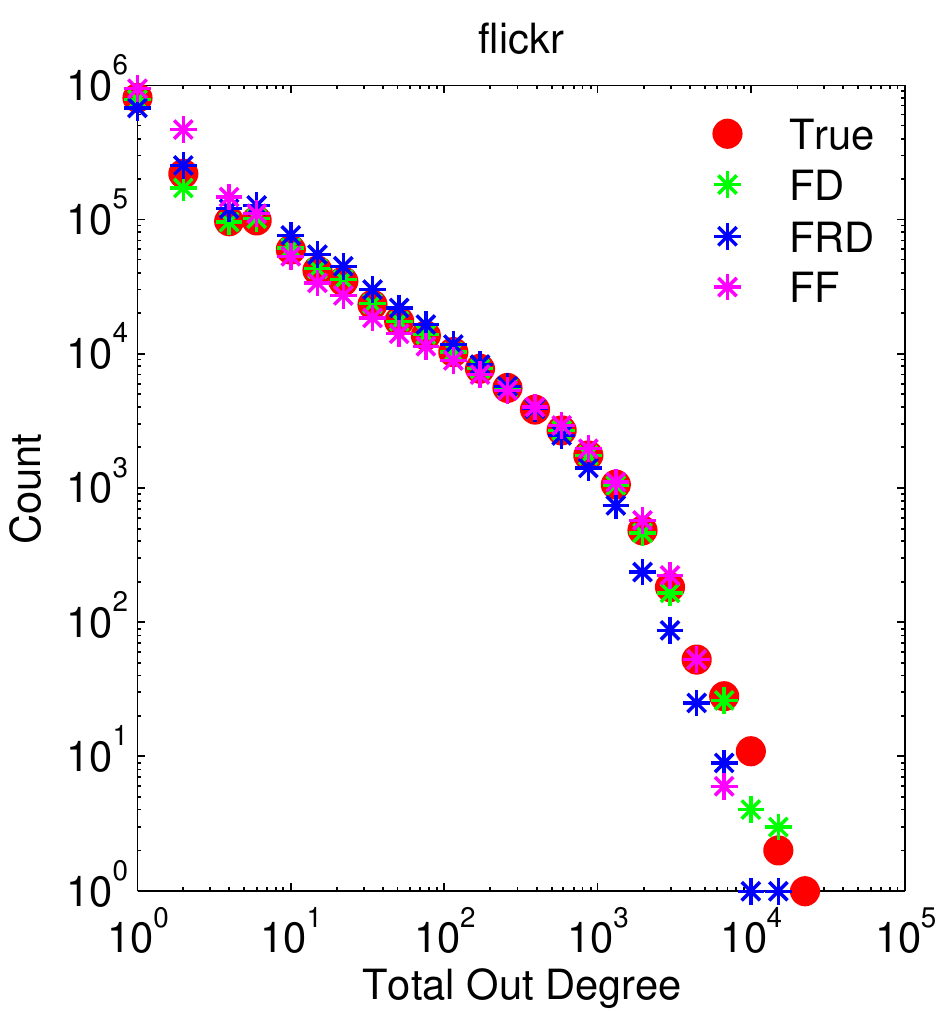}
  }
  \subfloat{\label{fig:rdd-flickr}
    \includegraphics[width=1.9in,trim=0 0 0 0]{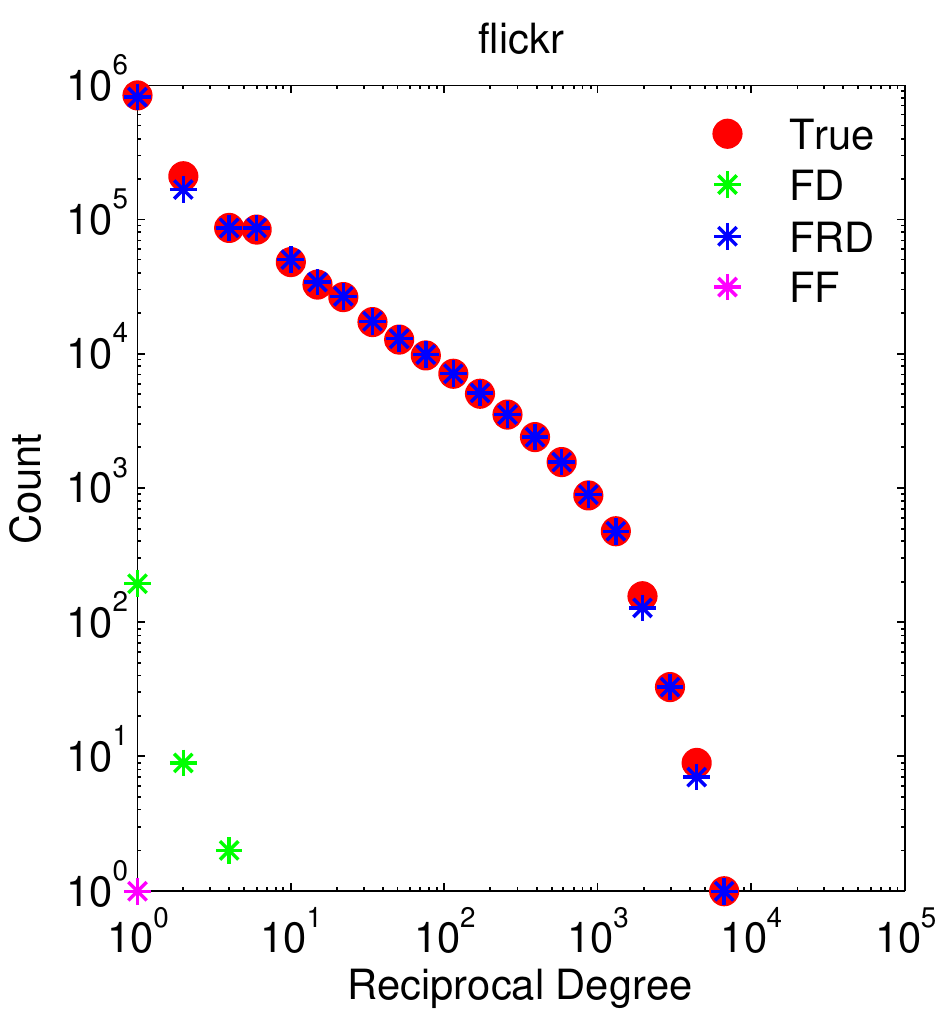}
  }
  \caption{Comparisons of degree distributions produced by various models for graph flickr.}
  \label{fig:flickr}
\end{figure*}
\begin{figure*}[htb]
  \centering
  \subfloat{\label{fig:irdd-livejournal}
  \includegraphics[width=1.9in,trim=0 0 0 0]{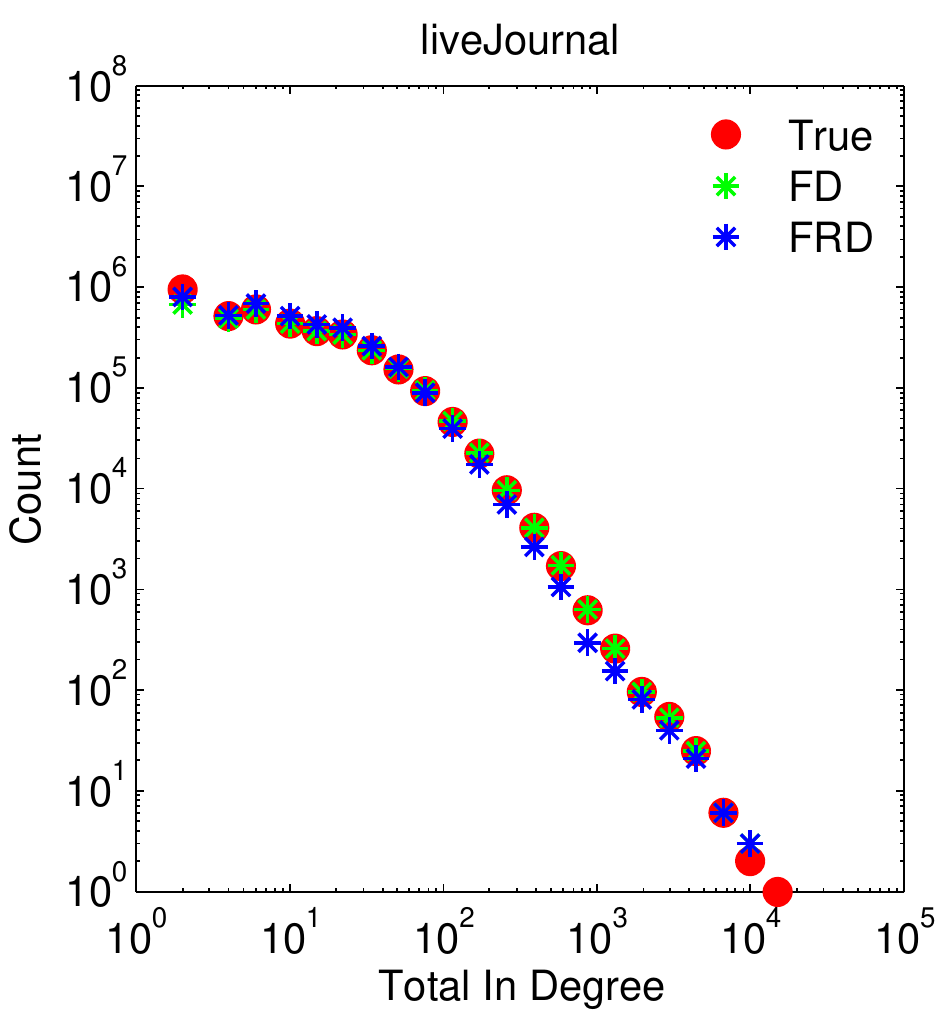}
  }
  \subfloat{\label{fig:ordd-livejournal}
    \includegraphics[width=1.9in,trim=0 0 0 0]{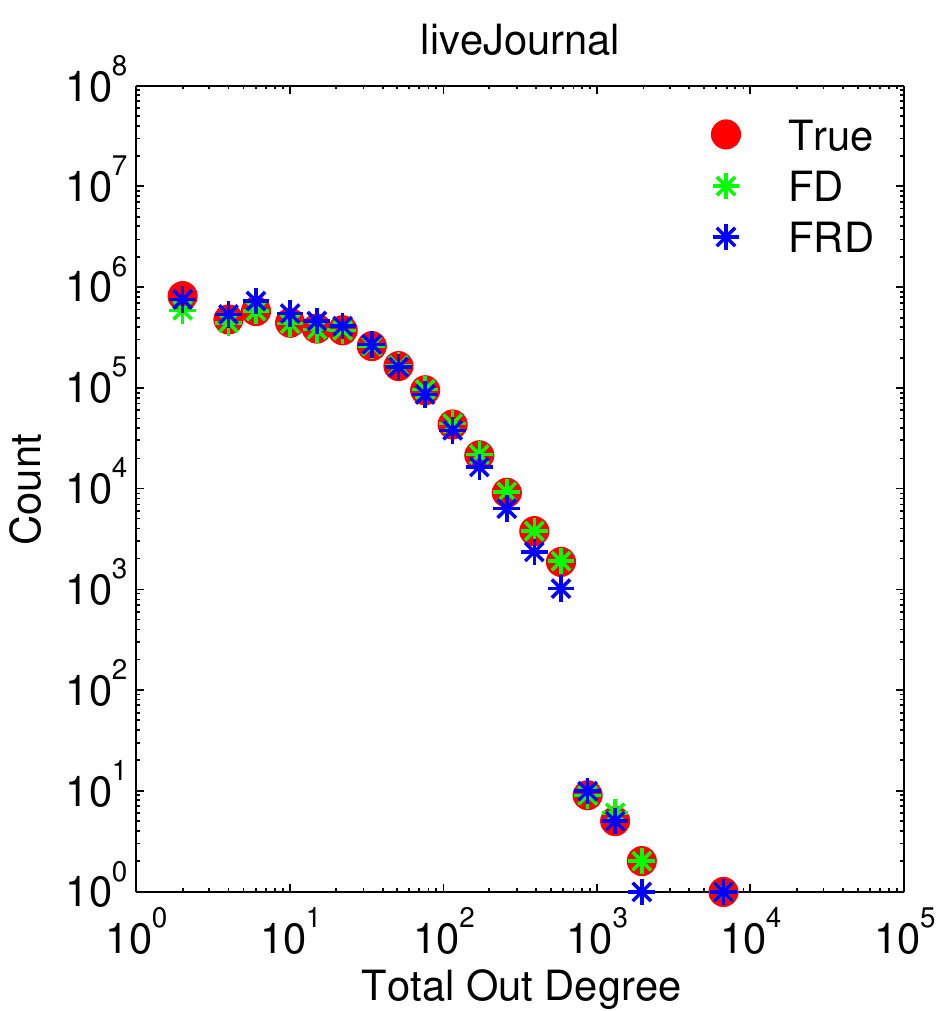}
  }
  \subfloat{\label{fig:rdd-livejournal}
    \includegraphics[width=1.9in,trim=0 0 0 0]{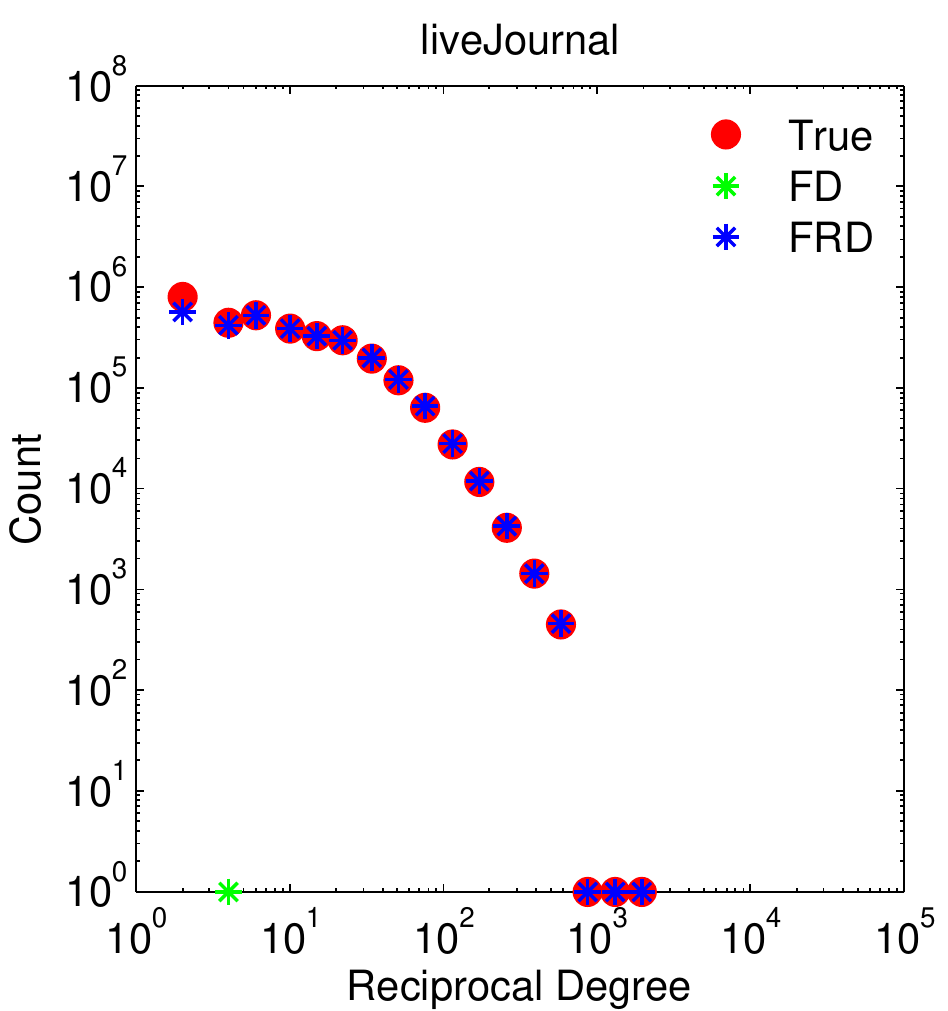}
  }
  \caption{Comparisons of degree distributions produced by various models for graph livejournal.}
  \label{fig:livejournal}
\end{figure*}
\begin{figure*}[htb]
  \centering
  \subfloat{\label{fig:irdd-cit-HepPh}
  \includegraphics[width=1.9in,trim=0 0 0 0]{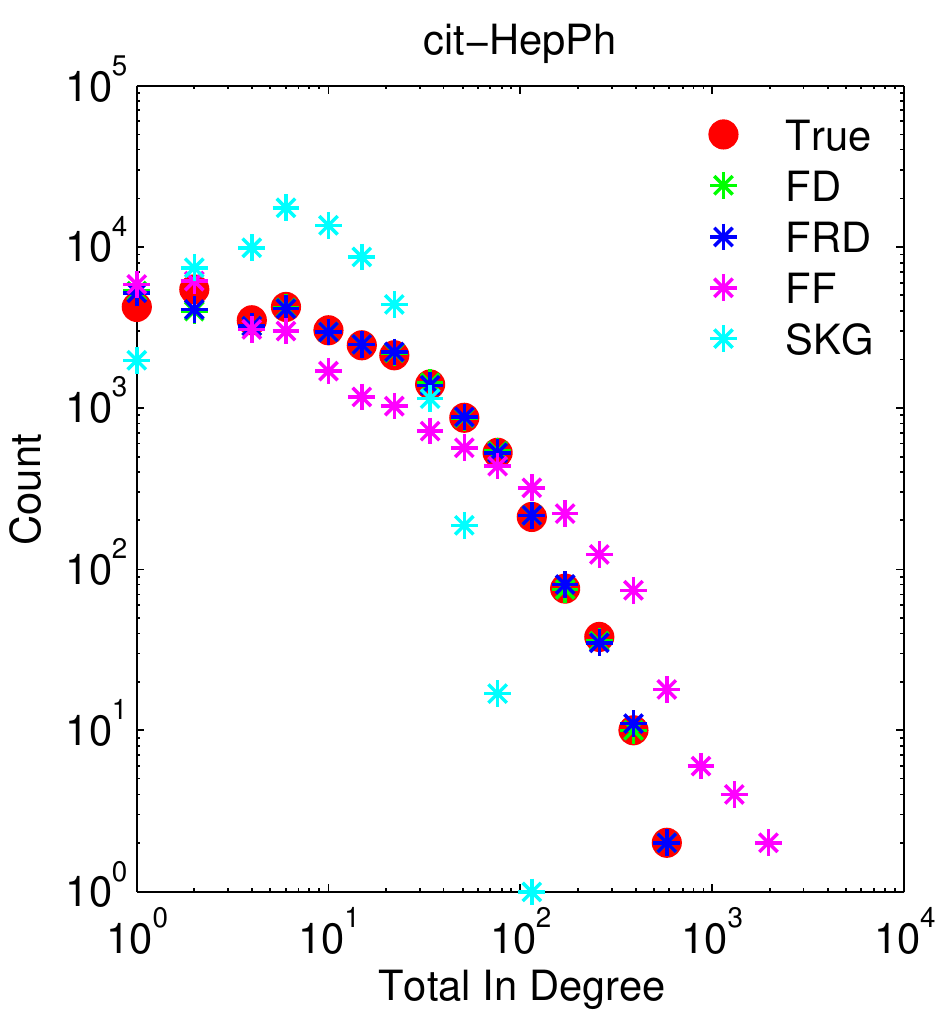}
  }
  \subfloat{\label{fig:ordd-cit-HepPh}
    \includegraphics[width=1.9in,trim=0 0 0 0]{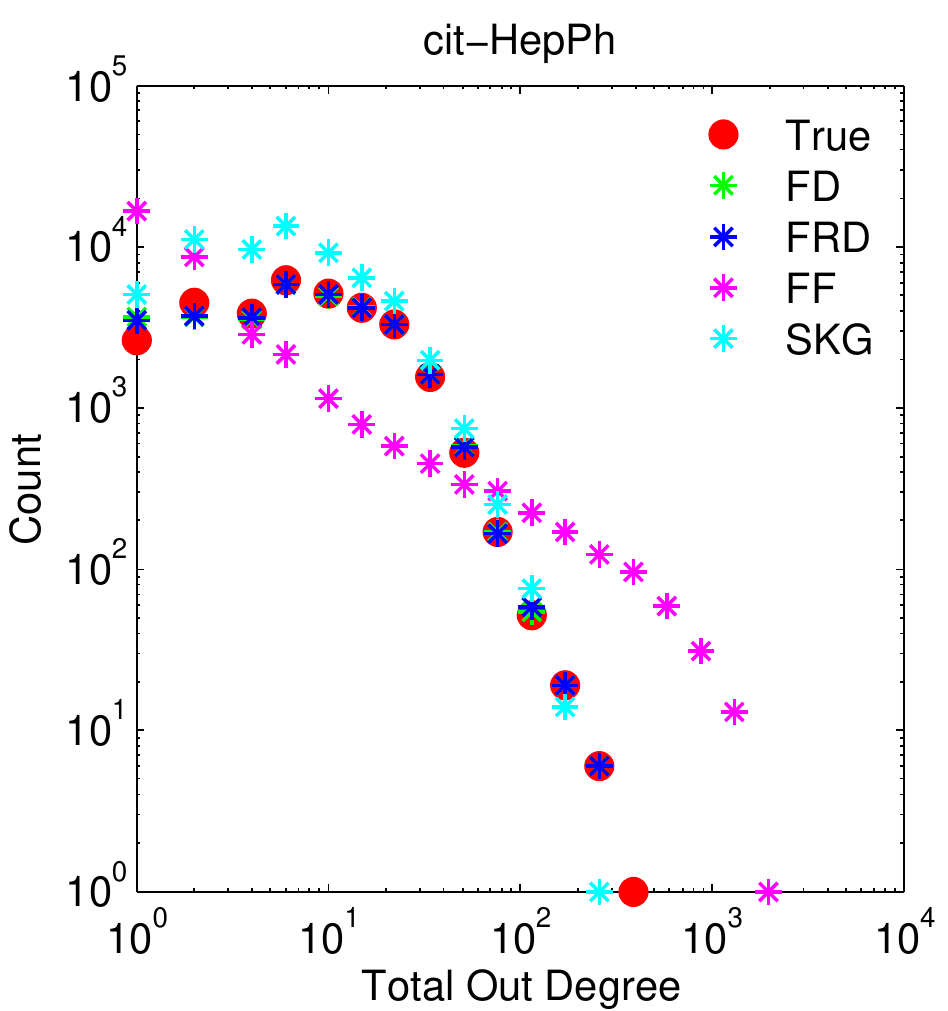}
  }
  \subfloat{\label{fig:rdd-cit-HepPh}
    \includegraphics[width=1.9in,trim=0 0 0 0]{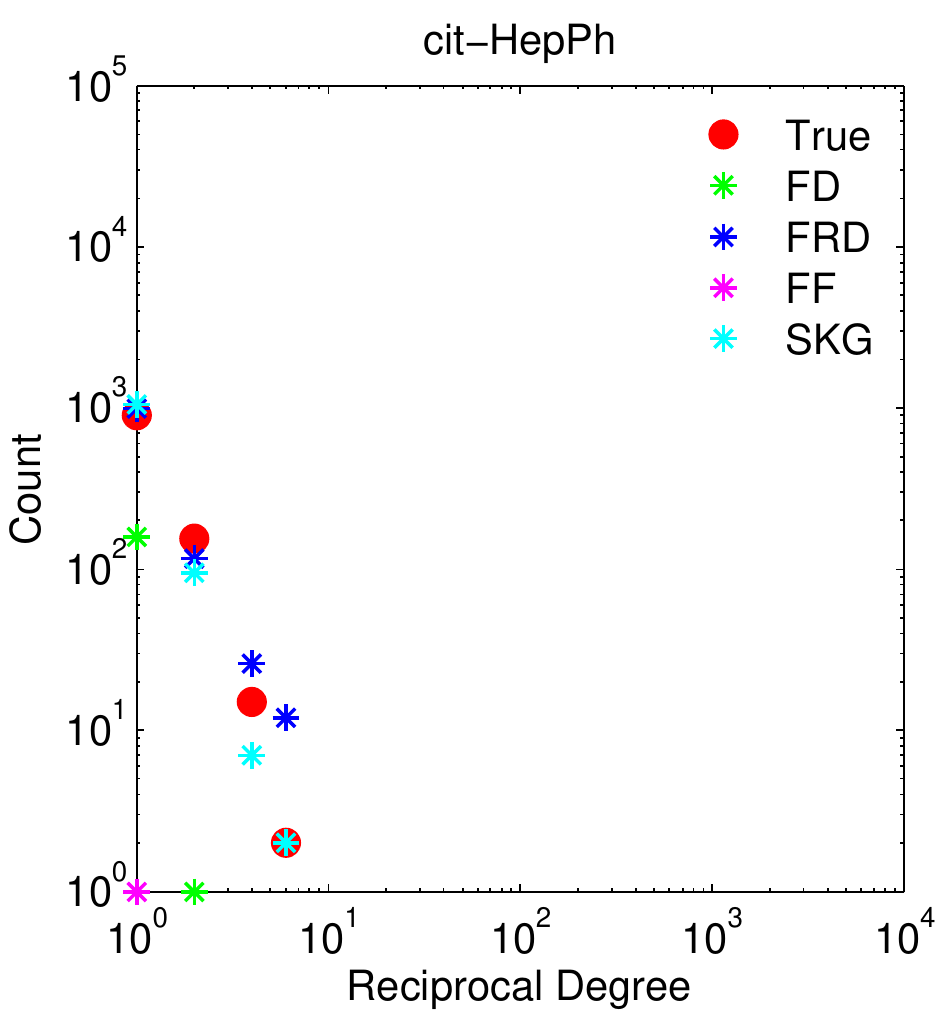}
  }
  \caption{Comparisons of degree distributions produced by various models for graph cit-HepPh.}
  \label{fig:cit-HepPh}
\end{figure*}
\begin{figure*}[htb]
  \centering
  \subfloat{\label{fig:irdd-web-NotreDame}
  \includegraphics[width=1.9in,trim=0 0 0 0]{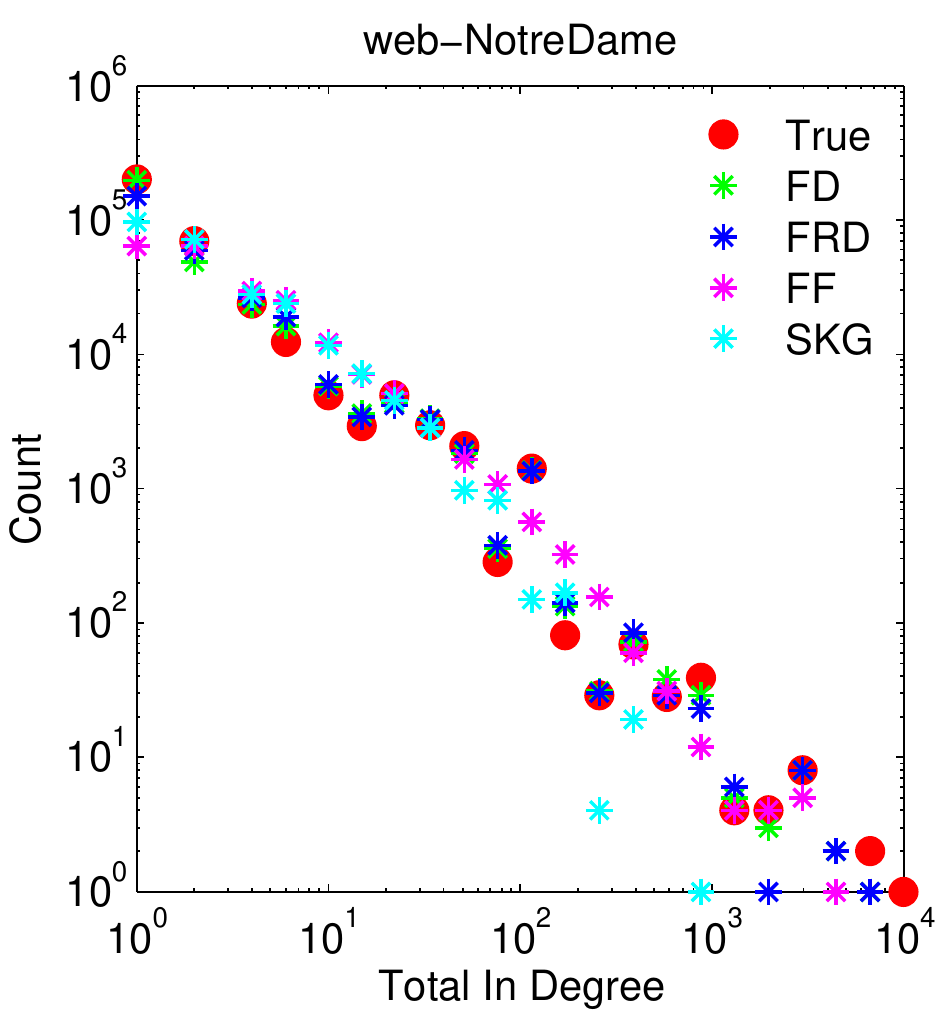}
  }
  \subfloat{\label{fig:ordd-web-NotreDame}
    \includegraphics[width=1.9in,trim=0 0 0 0]{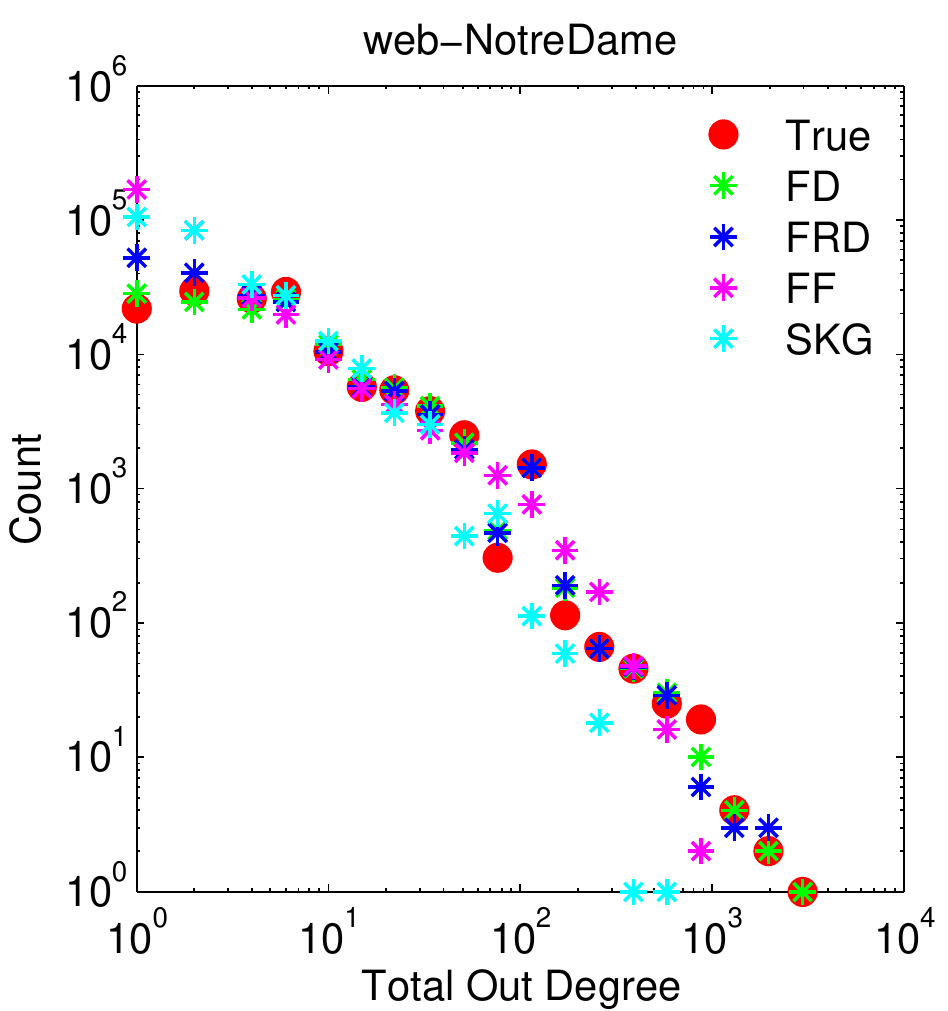}
  }
  \subfloat{\label{fig:rdd-web-NotreDame}
    \includegraphics[width=1.9in,trim=0 0 0 0]{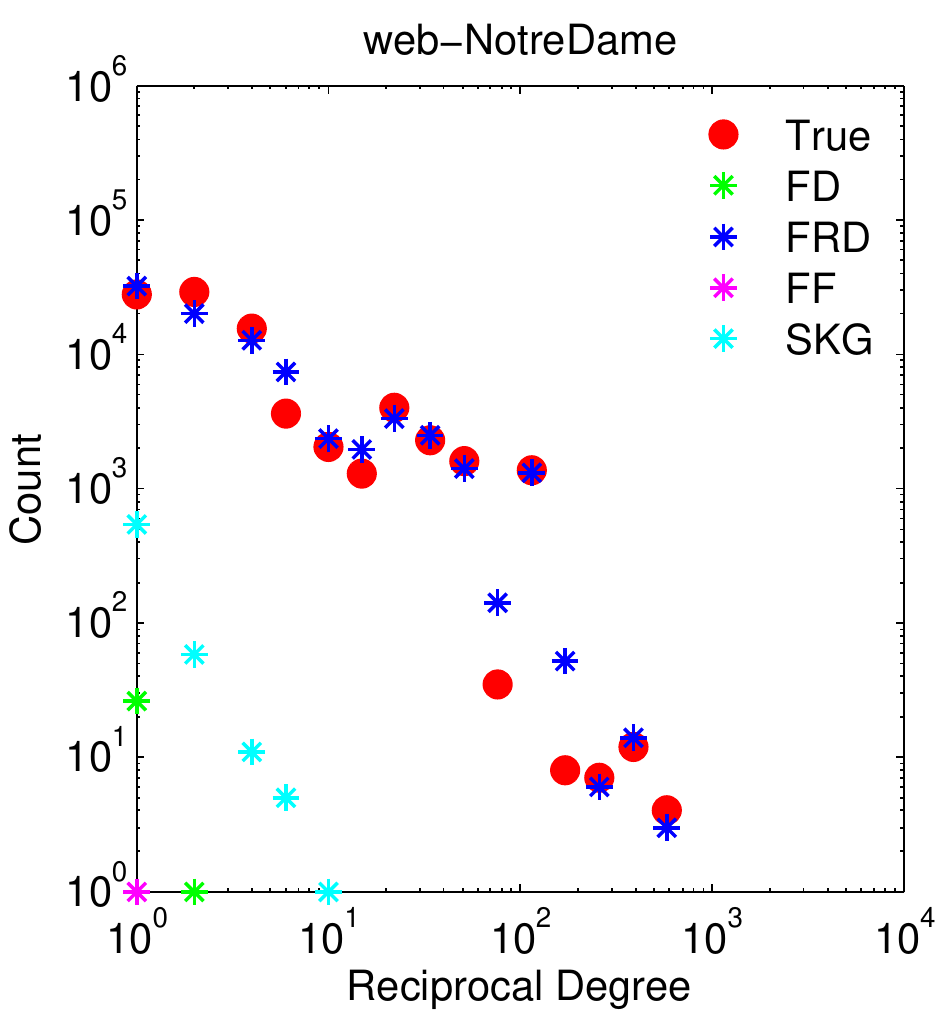}
  }
  \caption{Comparisons of degree distributions produced by various models for graph web-NotreDame.}
  \label{fig:web-NotreDame}
\end{figure*}

We also analyze the generated degree distributions by each model. The
plots are log-binned for  readability.
\Fig{soc-Epinions1} shows the results on the soc-Epinions1
graph. Here we see that all four methods do fairly well in terms of
matching the total in- and out-degree distributions. (The few low
values for SKG are due to its well-known cycling behavior
\cite{CPK11}.) However, only the FRD method matches the reciprocal
degree distribution. The FD and SKG methods produce far too few reciprocal edges and
FF does not produce any.
We see very similar behavior in \Fig{soc-LiveJournal} for
soc-LiveJournal, except here the FF and SKG degree distributions do
not match the total out-degree distribution very well. Once again,
neither FD nor SKG produces many reciprocal edges and FF does not
produce any.

For larger graphs, we have not included SKG due to the expense of
fitting the model. We do compare to FF, however, for the youtube and
flickr graphs shown in \Fig{youtube} and \Fig{flickr}, respectively.
After extensive tuning, FF is able to match the total in- and
out-degree distributions fairly well. But it of course cannot match
the reciprocal degree.
We also show results just for our methods on the largest graph:
livejournal in \Fig{livejournal}. We observe a very close match for
the FRD method in all three distributions. For completeness,
we show results for the citation network cit-HepPh in~\Fig{cit-HepPh} and web network Web-NotreDame in~\Fig{web-NotreDame}.

\section{Conclusion} \label{sec:conc}

Reciprocity in directed networks has not received much attention in terms of
generative models. A first-level goal for a generative model
would be to match  specified  in-, out-, and reciprocal degree distributions. The FRD generator does exactly that and therefore is a good null
model for social network analysis. It is  a variant of Chung-Lu
that explicitly takes care of reciprocal edges.
We find it very intriguing that existing graph models completely ignore reciprocal edges
despite the relatively high fraction of such edges.
While the main challenge in graph modeling would be to design a realistic
model that accounts for reciprocity, we feel that FRD is a first step 
in that direction.

\vspace{-1.5ex}
\section*{Acknowledgment}
This work was funded by the DARPA Graph-theoretic Research
    in Algorithms and the Phenomenology of Social Networks (GRAPHS)
    program and by the DOE Complex  Distributed Interconnected Systems
    (CDIS) Program. Sandia National Laboratories is a multi-program
    laboratory managed and operated by Sandia Corporation, a wholly
    owned subsidiary of Lockheed Martin Corporation, for the
    U.S. Department of Energy's National Nuclear Security
    Administration under contract DE-AC04-94AL85000.

\vspace{-3ex}

\end{document}